\newcommand{\comment}[1]{}
\newcommand{\argmax}{\arg\!\max}
\newtheorem{theorem}{Theorem}
\newtheorem{lemma}[theorem]{Lemma}
\definecolor{Gray}{gray}{0.9}
\definecolor{LightCyan}{rgb}{0.88,1,1}
\begin{document}
	\title{Energy Harvesting Communication Using Finite-Capacity Batteries with Internal Resistance  \thanks{Rajshekhar Vishweshwar Bhat, Mehul Motani and Teng Joon Lim are with the Department of Electrical and Computer Engineering,  National University of Singapore, Singapore 117583.  Part of this work has been presented at the IEEE GLOBECOM 2015 conference in San Diego, CA, USA, 6-10 December, 2015 \cite{Globecom}.}}	
	\author{Rajshekhar~Vishweshwar~Bhat,~\IEEEmembership{Graduate~Student~Member,~IEEE,}
		Mehul~Motani,~\IEEEmembership{Senior~Member,~IEEE,}
		and~Teng~Joon~Lim,~\IEEEmembership{Senior~Member,~IEEE}}
	\maketitle
	\thispagestyle{plain}
	\pagestyle{plain}
	\vspace{-1.5cm}
\begin{abstract}
Modern systems will increasingly rely on energy harvested from their environment.
Such systems utilize batteries to smoothen out the random fluctuations in harvested energy. These fluctuations induce highly variable battery charge and discharge rates, which affect the efficiencies of practical batteries that typically have non-zero internal resistances. In this paper, we study an energy harvesting communication system using a finite battery with non-zero internal resistance.
We adopt a dual-path architecture, in which harvested energy can be directly used, or stored and then used. In a frame, both time and power can be split between energy storage and data transmission. For a single frame, we derive an analytical expression for the rate optimal time and power splitting ratios between harvesting energy and transmitting data. We then optimize the time and power splitting ratios for a group of frames, assuming non-causal knowledge of harvested power and fading channel gains, by giving an approximate solution. 
{When only the statistics of the energy arrivals and channel gains are known, we derive a dynamic programming based policy and, propose three sub-optimal policies, which are shown to perform competitively.} In summary, our study suggests that battery internal resistance significantly impacts the design and performance of energy harvesting communication systems and must be taken into account.


\end{abstract}
\IEEEpeerreviewmaketitle

\section{Introduction} \label{intro}
Natural energy harvesting (EH) promises near-perpetual operation of electronic devices due to its renewable nature. But, it poses several challenges in system design as the power generated from EH sources varies randomly with time, unlike conventional sources. For example, solar power can vary from \SI{1}{\micro\watt} to \SI{100}{\milli\watt} in a \emph{small-sized} (approximate  area of $\SI{10}{\square\centi\meter}$) solar cell across a day \cite{Kansal,non_linear}. The harvested energy needs to be stored in storage elements\footnote{We use `batteries' to refer to storage elements in general.} such as batteries and super-capacitors, for reliable system operation. In the process, due to source power fluctuations, the batteries are subjected to variable charging powers (rates). 
{In addition, it may be  required to vary the battery discharge powers (rates), for instance, to drain the battery quickly to accommodate the incoming harvested energy and, perhaps, also to cater to the variable power demand at the load (the wireless transmitter, in our case). Hence, in EH systems, both the charge and discharge powers are more variable and unpredictable than in conventional systems.} This necessitates a fundamental  change in the  way we store and use the harvested energy mainly given that the battery charge/discharge efficiencies (precisely defined later) depend on the charge and discharge powers \cite{ieee_ragone,Krieger}. This dependency  can be easily seen by considering a simple battery model - a voltage source/sink
with a series resistance.  Drawing a larger power from the battery entails a larger current, and hence a larger power loss in the internal resistance. Therefore charge and discharge efficiencies decrease with increasing charge and discharge powers, respectively \cite{Ragone,Krieger,ieee_ragone}.

In this work,  we consider a low-power wireless transmitter powered entirely by an EH source that is equipped with a battery having  capacity constraints with a non-zero internal resistance. The internal resistances of  commercial rechargeable micro-batteries and ultra-capacitors lie in the range of a few micro ohms to several tens of ohms \cite{Varta,Maxwell,EDLC}. Typically, for  small-sized wireless nodes, the harvested power lies in the range \SI{1}{\micro\watt}-\SI{100}{\milli\watt} and the discharge power can vary from \SI{10}{\micro\watt} to a few hundred milliwatts. In these ranges, by considering the simple  battery model  presented in \cite{Krieger}, it can be easily shown that the  charge efficiency range can be up to 15 percentage points and the discharge efficiency range can be as high as 30 percentage points.

{In this paper, we focus on applications that require the node to communicate $N_s$ channel symbols per frame over a fading channel.} For instance, a sensor network deployed in an Internet of Things (IoT) application consists of sensor nodes with limited data processing and storage capabilities \cite{data}. {These may be designed to deliver a fixed number of coded symbols per frame, due to limited data storage capacity at the receiver. The number of bits of information that are reliably transmitted within a frame can be varied by varying the information rate.} 
The harvested power in such applications can be very small due to limitations on the harvester size and area. 
{To illustrate the power management issues involved in such EH-based nodes, suppose for simplicity that the initial energy stored in the battery is zero. In this case, whenever the harvested power is lower than the power required for system operation, one cannot run the system from the EH source alone. We must first store the harvested energy in a battery and then, simultaneously draw power from the battery and the EH source, and run the system from the combined power. In such a scenario, it is sensible to ask how to optimally divide a frame into two parts -- the first to store energy in the battery, and the next to discharge energy from the battery for data transmission.}
Further, when the harvested power is \emph{high}, directing all the power to the battery may result in significant losses across internal resistances. 
In such cases, it may be beneficial to charge the battery  with only a  fraction of the harvested power while the transmission is carried out with the remaining power. In the second part of the frame, energy from the EH source may be directed to the load at the same time as energy from the battery, perhaps because neither the EH source nor the battery are able to power the load on their own. {In this paper, we develop novel policies for managing the battery charging and discharging schedules in such an EH-based transmitter. }

The problem of EH communications has been addressed from a variety of other perspectives as well.  A comprehensive  review of recent advances in  energy harvesting communications is presented in \cite{Review,survey}.  The information capacity of EH systems with infinite capacity batteries is derived in \cite{Ozel,vsharma_capacity} and  \cite{Tutu,Jog,Dong, Biplab} present the EH communication with finite batteries.   Other battery limitations such as,  leakage \cite{leakage,Leakage2}, non-linear charging  \cite{non_linear} and inefficiency \cite{shixin,inefficiency} have also been considered.  
The optimal policies when the system operation cost is zero and non-zero are studied in  \cite{DWF,Rui_DWF,chandraiisc, DGP,Rui_TI,Rui_CET}. 
We note that the authors of the current paper considered a similar EH communication problem in \cite{Globecom}.  The current work significantly extends the model in \cite{Globecom} by fully incorporating the effects of the battery internal resistance and providing more in-depth analysis. 

{The main contributions of this paper are as follows:
\begin{itemize}[leftmargin=*]
    \item  We identify generic and tractable models for the battery charge and discharge efficiencies which account for their dependency  on charging and discharging rates. This incorporates the effects of battery internal resistance. 
	\item  We then formulate a single frame optimization problem and  derive compact expressions for optimal time and power sharing ratios. 
	\item Further, we formulate an off-line optimization problem which assumes \emph{a priori} knowledge of the harvested powers and channel gains to obtain optimal time and power sharing ratios in the multiple frame case. We show that in general, the problem is a non-convex optimization problem, and propose an iterative algorithm to solve the problem approximately.  
	\item Further, assuming statistical knowledge and causal information of the harvested power and channel power gain variations, we solve for the optimal \emph{on-line} policy by using stochastic dynamic programming. We then propose three sub-optimal on-line algorithms which are practically feasible. Among them, an algorithm that is inspired by the approximate off-line solution achieves a significantly better performance compared to the other two algorithms.  
	\item We also show via numerical simulations that the optimal policy designed for an ideal battery performs poorly when the internal resistance is not negligible.
\end{itemize} }
The remainder of the paper is organized as follows. The system model and assumptions are presented in Section \ref{model}. Section \ref{single frame} and Section \ref{optimal}
address the single and multiple frame rate maximization problems respectively. 
Numerical results are presented in Section \ref{numerical_results} followed by concluding remarks in
Section \ref{conclusion}.

\section{System Model and Assumptions} \label{model}
\subsection{Block Diagram and System Operation}\label{sec:diagram_frame}
The block diagram of the system is given in Fig.  \ref{fig:BD}.  The principal components of the system are the power splitter, battery, power combiner and the transmitter.  The power splitter divides the instantaneous harvested power to simultaneously charge the battery and power the transmitter directly through a zero loss direct path.  The power combiner combines the power drawn from the battery and the direct path. The transmitter consumes $p$ W for circuit operation  during transmission but does not consume any power when not transmitting, as in \cite{Rui_CET}.   
       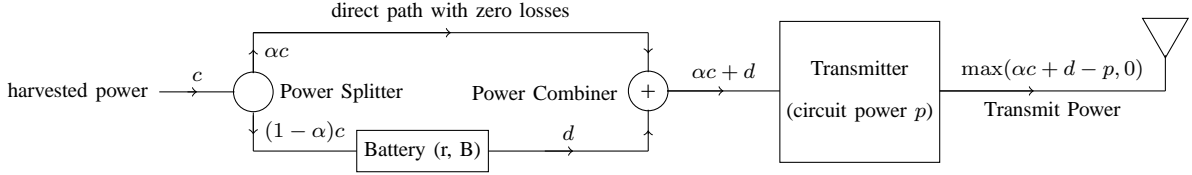
\begin{figure*}[t]
       	\centering
       	\begin{tikzpicture} [scale=2.5]
       	\draw  (0.85,0.125) -- (1.25,0.125) node [midway,above] {\scriptsize $c$}  node [left,xshift=-1cm] {\scriptsize harvested power};
       	\draw [->] (1,0.125) -- (1.05,0.125);
       	
       	\draw  (1.35,0.125) circle (3pt) node [align=right]  at (1.82,0.12)  {\scriptsize Power Splitter};
       	\draw  (1.35,0.225) -- (1.35,0.44) node[right,midway] {\scriptsize $\alpha c$};
       	\draw  (1.35,0.015) -- (1.35,-0.19) node[right,midway] {\scriptsize $(1-\alpha)c$};
       	\draw [->] (1.35,0.225) -- (1.35,0.35);
       	\draw [->] (1.35,0) -- (1.35,-0.1);
       	
       	\draw (1.35,0.44) -- (3.45,.44) node[above,midway] {\scriptsize direct path with zero losses};
       	\draw [->] (1.35,0.44) -- (2.4,0.44);
       	\draw (1.35,-0.19) -- (1.9,-0.19);
       	\draw (1.9,-0.30) rectangle (2.6,-0.08) node [align=center]  at (2.25,-0.19)  {\scriptsize Battery (r, B)};
       	\draw (2.6,-0.19) -- (3.45,-0.19) node [midway,above] {\scriptsize $d$};
       	\draw [->](2.9,-0.19) -- (3.05,-0.19);
       	\draw  (3.45,0.13) circle (3pt) node {\small +} node at (2.9,0.12) {\scriptsize Power Combiner};
       	\draw (3.45,-0.19) -- (3.45,.03);
       	\draw [->] (3.45,-0.1) -- (3.45,-0.08);
       	\draw (3.45,0.44) -- (3.45,.23);	
       	\draw [->] (3.45,0.35) -- (3.45,.33);
       	\draw (3.55,0.13) -- (4.15,0.13) node [midway,above] {\scriptsize $\alpha c+d$};
       	\draw [->](3.55,0.13) -- (3.85,0.13);
       	\draw (4.15,-0.25) rectangle (5,.5) node [align=center] at (4.575,0.25)  {\scriptsize Transmitter} node  [align=center] at (4.575,0.02)  {\scriptsize (circuit power $p$)};
       	\draw (5,.13) -- (6.2,0.13) node [midway,above] {\scriptsize $\max(\alpha c+d-p,0)$} node [midway,below] {\scriptsize Transmit Power}; 
       	\draw [->] (5,.13) -- (5.5,0.13);
       	\tikzset{
       		buffer/.style={
       			draw,
       			shape border rotate=-90,
       			isosceles triangle,
       			isosceles triangle apex angle=60,
       			node distance=.03cm,
       			minimum height=.03em
       		}
       	}
       	\draw (6.2,0.13) -- (6.2,.32);
       	\node at (6.2,.46)[buffer]{};
       	
       	\end{tikzpicture}
       	\caption{\scriptsize The dual-path EH communication system. A fraction ($0\leq\alpha\leq 1$) of the harvested power ($c$) can be directed to the load through the direct path. The remaining power is directed to the battery having capacity of $B$ joules and internal resistance of $r$ ohms. The battery is discharged at $d$ W.  The transmitter consumes $p$ W for its operation  during transmission but does not consume any power when not transmitting. }
       	\label{fig:BD} 
       	\vspace{-.75cm}
       \end{figure*}
The structure of the communication frame adopted in this work is shown in Fig. \ref{fig:Frame}. The harvested power, denoted by $c$, and channel power gain, denoted by $h$, are assumed to remain constant over the frame of length $\tau$ seconds. 
{We also assume that the channel bandwidth is $W\; \si{\hertz}$.}

We assume the battery cannot be charged and discharged simultaneously. This assumption is both practical and without loss of generality.  
From the practical perspective, charging and discharging of a battery/capacitor involves the movement of ions/electrons in mutually opposite directions and the particles can move in only one net direction at a time \cite{handbook}.
Mathematically one can relax the assumption and prove that charging and discharging a battery simultaneously is always suboptimal, similar to the arguments in \cite{inefficiency}. 

 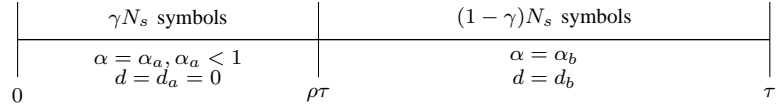
\begin{figure*}[t]
	\centering
	\begin{tikzpicture} [scale=10]
	\draw (0,0) -- (.4,0) node [below, midway] {\scriptsize $\alpha=\alpha_a, \alpha_a<1$}	;
	\draw (0,0) -- (.4,0) node [below, midway, yshift=-0.25cm] {\scriptsize $d=d_a=0$};	
	
	\draw (.4,0) -- (1,0) node [below, midway] {\scriptsize $\alpha=\alpha_b$}	;
	\draw (.4,0) -- (1,0) node [below, midway, yshift=-0.25cm] {\scriptsize $d=d_b$};

	\draw (0,0.05) -- (0, -0.05) node [below] {\scriptsize $0$};	;
	\draw (1,0.05) -- (1, -0.05) node [below] {\scriptsize $ \tau$};	
	\draw (.4,0.05) -- (.4, -0.05) node [below] {\scriptsize $\rho \tau$};	
	
	\draw (0,0) -- (.4,0) node [above, midway] {\scriptsize $\gamma N_s$ symbols};
	\draw (.4,0) -- (1,0) node [above, midway] {\scriptsize $(1-\gamma) N_s$ symbols};	
	\end{tikzpicture}
	\caption{\scriptsize The communication frame structure adopted in the paper. The frame length is $\tau$ seconds.  During $[0,\rho\tau)$, the power splitting ratio $\alpha=\alpha_a$, i.e., the power supplied to the transmitter is $\alpha_a c$ W. Over this time period, the battery must be charged, i.e., $\alpha_a<1$ and the discharge power is zero, i.e.,  $d=d_a=0$ W.  During $[\rho\tau,\tau]$, information must be transmitted, i.e., $(1-\gamma)N_s, \; 0< 1-\gamma\leq 1,$ symbols are transmitted. The power splitting ratio $\alpha=\alpha_b$, and the battery is charged at $(1-\alpha_b)c$ W and discharged at $d_b$ W, with $(1-\alpha_b)d_b = 0$, i.e., the battery cannot be charged and discharged at the same time. We assume that $\gamma N_s, \; 0\leq \gamma< 1,$ symbols are transmitted  in the first part of the frame. }
	\label{fig:Frame}
	\vspace{-.75cm}
\end{figure*}
{We assume an infinite backlog of data at the transmitter.
Based on the motivation provided in the introduction, to deal with the situation when the total available energy (the sum of the initial energy stored in the battery and the harvested energy) in a frame is lower than the total energy required to operate the system over the entire frame duration, we divide a communication frame into two phases, a charging phase in which the battery must be charged, and a transmitting phase, in which information must be transmitted.}
The frame structure (See Fig. \ref{fig:Frame}) is as follows:
\begin{itemize}[leftmargin=*]
	\item Over the time duration $[0,\rho\tau)$, $\rho \in [0,1]$, the battery must be charged, i.e., the charging rate is $(1-\alpha_a)c$ W with $0\leq\alpha_a<1$.
	 Since the battery cannot be charged and discharged simultaneously, the discharge power must be zero, i.e., $d=d_a=0$ W in this time duration. 	We assume that $\gamma N_s, \; 0\leq \gamma< 1,$ symbols   are transmitted by  utilizing the remaining $\alpha_a$ fraction of the harvested power from the direct path. 
	\item Over the time duration $[\rho\tau,\tau]$, information must be transmitted, i.e., $(1-\gamma)N_s, \; 0< 1-\gamma\leq 1,$ symbols are transmitted. The battery is charged at $(1 - \alpha_b)$ fraction of the harvested power or discharged at $d=d_b$ W. Whether the battery is being charged or discharged,  $\alpha_b$ fraction  of the harvested power is directly delivered to the transmitter.   
\end{itemize}
  The variable $\rho$, referred to as {\em time-splitting ratio} (TSR), is the ratio of the length of the charging phase to the total frame duration. The variable $\alpha$, referred to as  {\em power-splitting ratio} (PSR), indicates the  fraction of the harvested power directly used to power the transmitter. Note that by definition, in the first part of the frame the battery \emph{must be charged}, i.e., $\alpha_a<1$, but information may or may not be transmitted $(\gamma\geq 0)$.  However, in the second part of the frame, information \emph{must be transmitted}, i.e., $\gamma < 1$, but the battery may or may not be charged.

%
\begin{figure}[t]
	\centering
	\begin{tabular}{ll}
		\begin{subfigure}{0.5\textwidth}
			\centering
			\begin{tikzpicture}[scale=1.5]
			\filldraw [gray!5] (0.45,-0.85) rectangle (2.8,.95);
			\draw (0,-0.5) to[generic] (0,0.5) -- (0, 0.75) -- (0.5,0.75) to 
			(0.5,0.75) to[resistor,o-,label=\mbox{\scriptsize $r\; \si{\ohm}$}] (2,0.75) -- (2.5,0.75) -- (2.5,0.5) to
			(2.5,0.5) to[battery,label=\mbox{\scriptsize $V_B\; \si{\volt}$}] (2.5,-0.5) -- (2.5,-0.75)--(.5,-0.75) to[short,o-] (0,-0.75)--(0,-0.5);
			\draw [ ->] plot [smooth] coordinates { (0.3,.53) (1.9,.5) (2.1,0)} ;
			\node [] at (1.2, 0.35){\scriptsize Charge Current};
			\node [rotate=90] at (-0.3,0) {\scriptsize Power Source};
			\node at (2.6, 0.35) {\scriptsize +};
			\node at (2.6, -0.35) {\scriptsize -};
			\end{tikzpicture}
			\caption{An equivalent circuit diagram in the charge cycle. }
			\label{fig:charging circuit}
		\end{subfigure} 	
		& 
			\hskip-1cm
		\begin{subfigure}{0.5\textwidth}
			\centering
			\begin{tikzpicture}[scale=1.5]
			\filldraw [gray!5] (-0.3,-0.85) rectangle (2.05,.95);
			\draw (0,0.5) to[battery] (0,-0.5) -- (0, -0.75) -- (2,-0.75) to[short,o-] (2.5,-0.75)--(2.5,-0.5)
			(2.5,-0.5) to[generic] (2.5,0.5) -- (2.5,0.75)--(2,0.75) to[resistor,o-] (0.5,0.75) -- (0,0.75) -- (0,0.5);
			\draw [ ->] plot [smooth] coordinates { (0.3,.53) (1.9,.5) (2.1,0)} ;
			\node  at (1.2, 0.35){\scriptsize Discharge Current};
			\node [rotate=90] at (2.8,0) {\scriptsize Load};
			\node at (-0.7,0) {\scriptsize $V_B\; \si{\volt}$};
			\node at (1.25,1.1){\scriptsize $r\; \si{\ohm}$};
			\node at (-0.1, 0.35) {\scriptsize +};
			\node at (-0.1, -0.35) {\scriptsize -};
			\end{tikzpicture}
			\caption{An equivalent circuit diagram in the discharge cycle. }
			\label{fig:discharging circuit}
		\end{subfigure}
		
		\\
		\begin{subfigure}{0.5\textwidth}
			\centering
			\begin{tikzpicture}[scale=1.5]
			\draw [<->,thick] (0,2) node (yaxis) [above] {\scriptsize  $\mathcal{N}_c(c_p)$}
			|- (3,0) node (xaxis) [below,xshift=0.5cm] {\scriptsize charge power , $c_p$};
			\draw[yscale=2.5,domain=0:1.1,smooth,variable=\x] plot ({\x},{ln(1+(1.5-\x)/(1.5+\x))});
			\draw[yscale=2.5,domain=0:2.1,smooth,variable=\x] plot ({\x},{ln(1+(2.5-\x)/(2.5+\x))});
			\draw[dotted, yscale=2.5] (1.1,{ln(1+(1.5-1.1)/(1.5+1.1))})--(1.1,0) node [below] {\scriptsize $C_p(r_2)$};
			\draw[dotted, yscale=2.5] (2.1,{ln(1+(2.5-2.1)/(2.5+2.1))})--(2.1,0) node [below] {\scriptsize $C_p(r_1)$};
			\draw  (0,0) -- (0,1.73) node [left] {\scriptsize $\mathcal{N}_{c_0}$};    
			\draw [->] (2,1.8) -- (1,.9);
			\draw [->] (2,1.5) -- (1,.45);	
			\draw (1,.9) -- (2,1.8) node [right] {\scriptsize $r=r_1$};
			\draw (1,.45) -- (2,1.5) node [right] {\scriptsize $r=r_2>r_1$} ;				
			\end{tikzpicture}
			\caption{ The charging efficiency  model based on \cite{Krieger}. }
			\label{fig:CEDEa}
		\end{subfigure} 	
		& 

		\begin{subfigure}{0.5\textwidth}
			\centering
			\begin{tikzpicture}[scale=1.5]
			\draw [<->,thick] (0,2) node (yaxis) [above] {\scriptsize  $\mathcal{N}_d(d_p)$}
			|- (3,0) node (xaxis) [below] {\scriptsize discharge power, $d_p$};
			\draw[yscale=.8,xscale=1/4,domain=0:1.5*3,smooth,variable=\x] plot ({\x},{ln(1+(1.5*4-\x))}); 
			\draw[yscale=.8,xscale=1/4,domain=0:2.5*3,smooth,variable=\x] plot ({\x},{ln(1+(1.5*4-(1.5/2.5)*\x))}) ;
			\draw [dotted] (1.125,.7328) -- (1.125,0)  node [below] {\scriptsize $D_p(r_2)$};
			\draw [dotted] (1.875,.7328) -- (1.875,0)  node [below] {\scriptsize $D_p(r_1)$};
			\draw (0,0) -- (0,1.65) node [left] {\scriptsize $\mathcal{N}_{d_0}$};    
			\draw [->] (2,1.8) -- (1,1.25);
			\draw [->] (2,1.5) -- (1,.9);	
			\draw (1,1.25) -- (2,1.8) node [right] {\scriptsize $r=r_1$};
			\draw (1,.9) -- (2,1.5) node [right] {\scriptsize $r=r_2>r_1$} ;	            
			\end{tikzpicture}
			\caption{ The discharging efficiency  model based on \cite{Ragone}. }
			\label{fig:CEDEb}
		\end{subfigure}
	\end{tabular} 
	\caption{Equivalent circuit diagrams and charging/discharging efficiencies of a battery. The battery is modeled as a constant voltage source/sink with nominal voltage $V_B$ \si{\volt} with a series  internal resistance of $r$ \si{\ohm}.}
	\label{fig:CEDE}
	\vspace{-.8cm}
\end{figure}
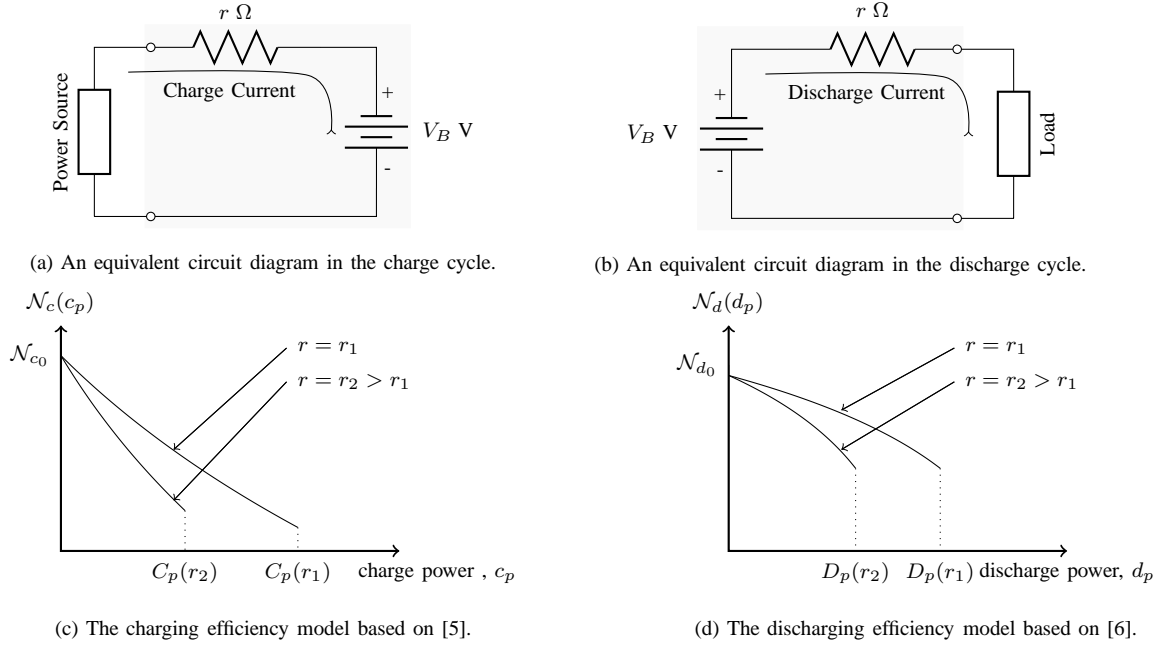

\subsection{Battery Charge/Discharge Efficiency Model}\label{sec:battery_model}
We assume that the battery capacity is $B$ joules and that it has a constant, non-zero internal resistance, denoted by $r$ ohms. We model a battery as an ideal voltage source/sink with a series  internal resistance (See Fig. \ref{fig:charging circuit} and Fig. \ref{fig:discharging circuit}). The losses across the internal resistance lead to battery inefficiencies.  

The charging efficiency of a battery, $\mathcal{N}_c(c_p,r)$ is defined as the ratio of the rate at which energy is stored in the battery, internally,  to the external charge power $c_p$. Based on \cite{Krieger}, we note that the charging efficiency is a convex decreasing function of the charge power, $c_p$, for a given $r$. This property is illustrated in Fig. \ref{fig:CEDEa}, where $\mathcal{N}_{c_0}$ is the vertical intercept and $C_p(r)$ is the maximum charge power constraint.
 Further, the internal charging power, $\mathcal{N}_c(c_p,r)c_p$,  is a concave function of $c_p$. 

The discharging efficiency, $\mathcal{N}_d(d_p,r)$ is defined as the ratio of the power delivered to the load, $d_p$, to the rate at which energy is drawn from the battery, internally. 
 It is shown to be a concave decreasing function of the external discharge power ($d_p$), for a given $r$ in \cite{Ragone}.  This property is illustrated in Fig. \ref{fig:CEDEb}, where $\mathcal{N}_{d_0}$ is the vertical intercept and, $D_p$, a concave decreasing function of $r$, is the maximum discharge power. Further, the internal discharging power, $d_p/\mathcal{N}_d(d_p,r)$,  is a convex function of $d_p$. We also note that it is physically impossible to charge and discharge a battery beyond $C_p$ and $D_p$, respectively. In the rest of the paper, we denote $\mathcal{N}_c(c_p,r)$ as $\mathcal{N}_c(c_p)$ and $\mathcal{N}_d(d_p,r)$ as $\mathcal{N}_d(d_p)$ for brevity. 

In general, the capacity of a battery varies with charge/discharge rates and this effect is referred to as the rate-capacity effect. 
In many cases, the rate-capacity effect can be easily mitigated with  additional circuitry \cite{TI} and, by avoiding battery overcharging or undercharging leading to extreme conditions \cite{Krieger}.  Further, \cite{Kansal_eff_Dual} argues that the rate-capacity effect is insignificant at  low power levels.  Hence, we do not account for the rate-capacity effect in this work.  
\section{Single-Frame Rate Optimization}\label{single frame}
For transmission over an additive white Gaussian noise (AWGN) channel with power gain $h$, transmit symbol energy $P$ and unit received noise power spectral density, the maximum achievable rate is $\log\left(1+hP\right)$ bits per channel symbol \cite{Cover06}.   
As in \cite{DWF,inefficiency}, we  assume that the channel power gain for the current frame remains constant and its value is known at the start of the frame.  {We assume that the number of coded (i.e. channel) symbols to be transmitted in a frame is fixed at $N_s$. }

For any given $\rho$,  the average rates within the two disjoint periods can be obtained as follows.
\paragraph{For $t\in[0,\rho\tau)$} Without loss of generality, assume that we transmit $\gamma N_s, \; 0\leq \gamma< 1$, symbols within the first part of the frame.  Since the transmitter is supplied with $\alpha_a c$ W (recall that $d_a=0$)  for $\rho\tau$ seconds directly from the EH source, the average symbol power $P_a=(\alpha_a c-p)\rho \tau/(\gamma N_s)$. If $\gamma=0$, then $P_a=0$.  Consequently, the  information rate is $R_a=\log(1+hP_a)$.
Note that we can transmit symbols only if $P_a>0$ implying that $(\alpha_a c-p)\rho \tau$ must be strictly greater than zero for the symbol transmission to take place. 
Hence, we have $\gamma=0$ if  $(\alpha_a c-p)\rho \tau<0$. 
Since the battery is charged at $(1-\alpha_a)c$ W, the amount of energy stored in the battery over $[0,\rho\tau)$ is $B_{\rho\tau}=\mathcal{N}_c((1-\alpha_a)c)(1-\alpha_a)c\rho\tau$.
 
\paragraph{For $t\in[\rho\tau,\tau]$} In the second  part of the frame, the EH source and the battery  supply  $\alpha_b c$ W  and $d_b$ W, respectively, to the transmitter, with $(1-\alpha_b)d_b = 0$ as the battery cannot be charged and discharged at the same time. Since the number of transmitted symbols is $(1-\gamma)N_s$, the average symbol power, $P_b =(\alpha_b c-p+d_b)(1-\rho)\tau/((1-\gamma) N_s)$ and the information rate is $R_b=\log(1+hP_b)$.   
Since the battery charging power over this time period is $(1-\alpha_b)c$, internally the harvested energy gets stored in the battery at the rate of $\tilde{c_b}=\mathcal{N}_c((1-\alpha_b)c)(1-\alpha_b)c$ W. Since $d_b$ is the discharge power, internally the battery energy gets drawn at $\tilde{d_b}=d_b/\mathcal{N}_d(d_b)$ W. 

Consolidating the information rates within the above two disjoint periods,  the average rate in the frame is given by, 
\begin{align}\label{eq:Rateind}
R(\rho,\alpha_a,\alpha_b,\gamma,d_b)=\gamma R_a +(1-\gamma )R_b 
\end{align} 
{Before formulating the optimization problem, we make an important remark on the generality of the \emph{two-phased} frame structure described in Section \ref{sec:diagram_frame}. In the proposed frame structure, note that the charging and discharging rates can take at most two values in a frame as per values of $\alpha_a$, $\alpha_b$, $d_a$ and $d_b$. To understand why it is sufficient to divide the frame into two phases, consider a frame that is divided into more than two phases with possibly different charging, discharging and transmit powers in each of the phases. Now, note that the internal charging powers, discharging powers and information rates are concave, convex and concave functions of the external charging, discharging and transmit powers, respectively. Hence, the loss across the internal resistance is minimized and, simultaneously, the information rate is maximized when the battery is charged and discharged at uniform powers. Hence, we can always replace any number of phases with a single phase without any loss of optimality. As described earlier, it may not be feasible to have a frame with only one phase as the amount of energy required to run the system over the entire frame may be more than the amount of energy available. Hence, we conclude that the frame structure described in Section \ref{sec:diagram_frame} is completely general and sufficient to extract the maximum possible performance from the system.}

To maximize the information rate per frame, we must thus solve the following optimization problem:
{
\begin{subequations}\label{eq:singlerate}
	\begin{alignat}{2}
	\text{P0}:\quad \underset{\rho, \alpha_a, \alpha_b, \gamma, d_b}{\text{maximize}} \quad &    R(\rho,\alpha_a,\alpha_b,\gamma,d_b) \quad \quad   \\  
	\textrm{subject to} \quad &  (\tilde{d_b}-\tilde{c_b})(1-\rho)\tau-B_{\rho\tau}-B_0 \leq 0  \label{eq:sc1} \\
	&   B_0+B_{\rho\tau}-(\tilde{d_b}-\tilde{c_b})(1-\rho)\tau-B\leq 0   \label{eq:sc2} \\
	&   0\leq \rho \leq 1\label{eq:sc3a}\\
	&   \alpha_c\leq \alpha_a, \alpha_b\leq 1\label{eq:sc3b}  \\
	&   0\leq  d_b \leq D_p  \label{eq:sc4} \\
	&   (1-\alpha_b)d_b=0    \label{eq:s15}
	\end{alignat}
\end{subequations}
where $\alpha_c=1-C_p/c$ and \eqref{eq:sc1} is the energy causality constraint which says that energy drawn from the battery ($\tilde{d_b}(1-\rho)\tau$) has to be less than or equal to the energy stored in the battery ($\tilde{c_b}(1-\rho)\tau+B_{\rho\tau}+B_0$). The inequality in \eqref{eq:sc2} is the battery capacity constraint, \eqref{eq:sc3b} accounts for the maximum charge rate constraint, i.e., the charge rate $(1-\alpha)c$ must not exceed the maximum charge rate $C_p$, \eqref{eq:sc4} is the maximum discharge rate constraint and \eqref{eq:s15} captures the fact that the battery cannot be charged and discharged simultaneously. Recall that $\tilde{d_b}$ and $\tilde{c_b}$ are functions of $d_b$ and $\alpha_b$, respectively.}

We now make the following observation which says that it is not optimal to transmit any symbols in the first part of the frame in the optimal solution to P0.  
\begin{lemma}\label{lemma:transmitpower}
	In the optimal solution to {\normalfont P0} in \eqref{eq:singlerate},
	\begin{enumerate}
		\item the total number of symbols transmitted and the average rate over $[0,\rho^*\tau]$ are zero, i.e.,  $\gamma^* N_s=0$ and $R_a^*=0$ and, 
		\item all $N_s$ symbols are transmitted  during $[\rho^*\tau,\tau]$ at the average power $(\alpha_b^*c-p+d_b^*)(1-\rho^*)\tau/N_s$.
	\end{enumerate} 
\end{lemma}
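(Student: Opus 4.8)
The plan is to show that any feasible solution with $\gamma > 0$ can be modified into another feasible solution with $\gamma = 0$ that achieves a weakly higher average rate, so an optimal solution must have $\gamma^* = 0$. The key insight is that the energy used for transmission in the first phase, namely $(\alpha_a c - p)\rho\tau$, travels only through the lossless direct path, so the total transmission energy budget over the frame is unchanged if we instead move those $\gamma N_s$ symbols into the second phase and let the first phase run a pure charging mode (with the transmitter shut off, i.e., drawing no circuit power $p$, and $\alpha_a$ set so that all harvested power goes to the battery, subject to the max-charge-rate constraint). First I would write out the total transmit energy $E_{tx} = (\alpha_a c - p)\rho\tau \cdot \mathbbm{1}[\gamma>0] + (\alpha_b c - p + d_b)(1-\rho)\tau$ and observe that whatever fraction is consumed as circuit power $p$ during transmission is the only ``overhead'', and that overhead is at least as small if transmission happens over a single contiguous interval.

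The core of the argument is a concavity/Jensen step. Given a feasible point with $\gamma>0$, consider spreading all $N_s$ symbols over the \emph{entire} available transmit energy. Since $R_a = \log(1+hP_a)$ and $R_b = \log(1+hP_b)$ with $P_a, P_b$ the per-symbol energies, and $\log(1+hx)$ is concave in $x$, the weighted-average rate $\gamma R_a + (1-\gamma) R_b$ is maximized, for a fixed total transmit energy split across a fixed total number of symbols, by equalizing the per-symbol powers; moreover consolidating the $\rho\tau$ worth of direct-path transmit time into the second phase does not reduce the total energy available for transmission (this is essentially the same concavity-of-rates, convexity-of-internal-discharge-power reasoning invoked in the remark preceding P0). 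I would make this precise by exhibiting the explicit replacement: set $\tilde\gamma = 0$; keep $\rho$, and during $[0,\rho\tau)$ turn off transmission and charge at rate $(1-\alpha_a')c$ with $\alpha_a' = \max(\alpha_c, 0)$ so that $B_{\rho\tau}$ weakly increases (relaxing the causality and capacity constraints is handled by noting more stored energy only helps, and if capacity \eqref{eq:sc2} would be violated we simply reduce the amount stored, which is always possible by choosing a larger $\alpha_a'$); then in $[\rho\tau,\tau]$ transmit all $N_s$ symbols at power $(\alpha_b c - p + d_b)(1-\rho)\tau/N_s$, which is $\geq$ the old $P_b$, so $R_b$ weakly increases, hence the new average rate $R_b^{new} \geq \gamma R_a + (1-\gamma) R_b$ because $R_b^{new} \geq \max(R_a, R_b)$ would follow once we also account for the freed-up direct-path energy $(\alpha_a c - p)\rho\tau$ being added to the second-phase transmit pool.

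The step I expect to be the main obstacle is handling the interaction with the energy-causality and battery-capacity constraints \eqref{eq:sc1}--\eqref{eq:sc2} when we alter the first phase: increasing $B_{\rho\tau}$ could push the battery over capacity $B$, and changing $\alpha_a$ changes both the charge rate (affecting $\mathcal{N}_c$ and hence the \emph{internal} stored energy, which is concave in the charge power) and the direct-path transmit energy. The clean way around this is to argue monotonicity: the objective P0 is nondecreasing in the amount of energy available for second-phase transmission and nonincreasing in nothing else relevant, so it suffices to show the replacement never makes any constraint infeasible, and when a constraint (e.g. capacity) becomes tight we still have a free parameter ($\alpha_a'$, equivalently how much we charge in phase one) to restore feasibility without decreasing the transmit-energy pool below its original value. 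I would also need the small observation that if $(\alpha_a c - p)\rho\tau \le 0$ then already $\gamma = 0$ by the problem's own stipulation, so the only case to treat is $(\alpha_a c - p)\rho\tau > 0$, where moving that strictly positive energy into phase two strictly (weakly, after the $\log$) improves $R_b$. Assembling these pieces gives both claims: $\gamma^* N_s = 0$, $R_a^* = 0$, and all $N_s$ symbols sent in $[\rho^*\tau,\tau]$ at the stated average power.
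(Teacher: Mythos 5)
Your opening Jensen step is the same as the paper's: $\gamma R_a+(1-\gamma)R_b\le\log\bigl(1+\tfrac{h\tau}{N_s}(\rho(\alpha_a c-p)+(1-\rho)(\alpha_b c-p+d_b))\bigr)$, reducing the problem to comparing total transmit energies. The gap is in what you do next. Your replacement keeps $\rho$ fixed, shuts off the transmitter in phase one, and claims the ``freed-up direct-path energy $(\alpha_a c-p)\rho\tau$'' is ``added to the second-phase transmit pool.'' But in this architecture direct-path power is delivered instantaneously; energy harvested during $[0,\rho\tau)$ that is not consumed by the transmitter can reach phase two only by being routed through the battery, where it suffers a charge loss $\mathcal{N}_c(\cdot)\le 1$ and then a discharge loss $\mathcal{N}_d(\cdot)\le 1$ (indeed the paper explicitly notes after Theorem 1 that the un-stored fraction $\alpha_a^*c$ is simply wasted). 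The round-trip loss on the redirected energy is not necessarily covered by the circuit-power saving $p\rho\tau$, so your modified policy's transmit energy $E_{\mathrm{new}}$ need not dominate $E_{\mathrm{old}}$, and the inequality $R_b^{\mathrm{new}}\ge\gamma R_a+(1-\gamma)R_b$ does not follow. A second, concrete error: choosing $\alpha_a'=\max(\alpha_c,0)$ charges at the maximum rate $\min(c,C_p)$, which does \emph{not} weakly increase $B_{\rho\tau}$ — the internal charging power $\mathcal{N}_c(c_p)c_p$ is only concave, not monotone, and in the paper's own model $\mathcal{N}_c(C_p)=0$, so max-rate charging can store \emph{nothing}.

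The paper closes the argument differently and more cleanly: rather than time-shifting energy at fixed $\rho$, it bounds the bracketed total transmit energy of \emph{any} feasible policy (with $\alpha_a c>p$) by that of the $\rho=0$ policy. When $d_b=0$ the bound $\rho(\alpha_a c-p)+(1-\rho)(\alpha_b c-p)\le c-p$ is immediate; when $d_b>0$ it uses the energy-causality constraint \eqref{eq:sc1} to write $d_b(1-\rho)\tau\le(1-\alpha_a)c\rho\tau+\mathcal{N}_d(\hat d_b)B_0$ (exploiting $\mathcal{N}_c,\mathcal{N}_d\le1$), which yields the total $\le c-p+\hat d_b$ — exactly the transmit energy of the full-frame, all-direct-path policy that discharges only the initial store $B_0$. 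This sidesteps entirely the question of how to reroute phase-one direct-path energy, which is where your construction breaks. To repair your proof you would need to replace the ``keep $\rho$, move the energy'' step with a comparison against the $\rho=0$ alternative (or otherwise prove that the causality constraint already caps $d_b$ by the losslessly-transferred equivalent of the phase-one harvest).
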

\begin{proof}
	See Appendix A.
\end{proof}
As a result of the above lemma, the objective function of P0 in \eqref{eq:singlerate} can be rewritten as $R(\rho,\alpha_a,\alpha_b,\gamma,d_b)=(1-\gamma^*)R_b  =\log((\alpha_bc-p+d_b)(1-\rho)\tau/N_s)$. Note that the optimal value of $\gamma^* = 0$, i.e. $\gamma$ is no longer an optimization variable. {The result also highlights that the number of symbols transmitted in both the phases in the optimal case is always an integer, thus satisfying requirements of practical applications.}
Though the objective function now has a simpler form, due to coupling of $\rho$, $\alpha_a$, $\alpha_b$ and $d_b$, P0 is still a non-convex optimization problem.  
However, we exploit the structure of the problem and present the optimal solution in the following theorem. 
 {\begin{theorem}\label{thm:opt_save_ratio_gen}
The optimal solution to {\normalfont P0} is $\gamma^*=0$, $\alpha_a^*= \argmax_{\alpha_c\leq \alpha\leq 1}\left(\mathcal{N}_c((1-\alpha)c)(1-\alpha)c\right)$, $\alpha_b^*=1$,  $\rho^*=\min(\rho_B,\rho_r)$, 
		where  $\rho_r=\argmax_{\rho} \left((\alpha_b^* c-p+\hat{d}_b(\alpha_a^*,\rho))(1-\rho)\right)$ and $\rho_B=(B-B_0)/(\mathcal{N}_c(c_p^*)c_p^*\tau)$, and $d_b^*=\hat{d}_b(\alpha_a^*,\rho^*)$, where we define $c_p^*=(1-\alpha_a^*)c$, $\hat{d}_b(\alpha_a^*,\rho)=\{\min(d_b,D_p):d_b/\mathcal{N}_{d}(d_b)= (\mathcal{N}_c(c_p^*)c_p^*\rho+B_0/\tau)/(1-\rho)\}$. 
\end{theorem}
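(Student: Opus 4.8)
The plan is to remove the five variables of P0 one at a time, using monotonicity and simple exchange arguments to pin down $\gamma$, $\alpha_b$, $\alpha_a$ and $d_b$, and then to solve a one-dimensional problem in $\rho$.

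Lemma~\ref{lemma:transmitpower} already fixes $\gamma^\star=0$, so the objective becomes a monotone transform of $f:=(\alpha_bc-p+d_b)(1-\rho)$, with $\alpha_a$ now entering only through $B_{\rho\tau}$ in the constraints. Next I would show $\alpha_b^\star=1$ by exchange: if $\alpha_b<1$ then \eqref{eq:s15} forces $d_b=0$; replacing $(\alpha_b,d_b)$ by $(1,0)$ with $\rho,\alpha_a$ unchanged keeps feasibility — with $\alpha_b=1$ we have $\tilde{c_b}=0$, so \eqref{eq:sc1} is trivial, while \eqref{eq:sc2} for the original point already gives $B_0+B_{\rho\tau}\le B$, which is \eqref{eq:sc2} for the new point — and it does not decrease $f$ since $\alpha_bc\le c$. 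Hence some optimum has $\alpha_b^\star=1$, which also makes the phase-2 charging power, and thus $\tilde{c_b}$, vanish.

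With $\alpha_b^\star=1$ fixed, $f$ is strictly increasing in $d_b$, and $d_b$ is bounded only by $d_b\le D_p$ and by \eqref{eq:sc1}, i.e.\ $d_b/\mathcal N_d(d_b)\le(\mathcal N_c(c_p)c_p\rho\tau+B_0)/((1-\rho)\tau)$ with $c_p=(1-\alpha_a)c$. Since $d_b\mapsto d_b/\mathcal N_d(d_b)$ is strictly increasing, at an optimum one of these bounds binds, so $d_b^\star=\hat d_b(\alpha_a,\rho)$. A larger internal charging power $\mathcal N_c(c_p)c_p$ loosens \eqref{eq:sc1} and hence permits a larger $\hat d_b$ and a larger $f$, so one wants $\alpha_a=\alpha_a^\star:=\argmax_{\alpha_c\le\alpha\le1}\mathcal N_c((1-\alpha)c)(1-\alpha)c$ — provided this respects \eqref{eq:sc2}. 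Since $\alpha_b^\star=1$ means the battery is never charged in phase~2, the stored energy peaks at $t=\rho\tau$ with value $B_0+\mathcal N_c(c_p)c_p\rho\tau$, so \eqref{eq:sc2} amounts to $\rho\le(B-B_0)/(\mathcal N_c(c_p)c_p\tau)$. For $\rho\le\rho_B$ the choice $\alpha_a^\star$ is feasible and optimal; for $\rho>\rho_B$ one must use a smaller $\mathcal N_c(c_p)c_p$, banking only $B-B_0$ of internal energy, and a short computation (splitting on the sign of $c-p$) shows the best objective attainable there — of the form $(c-p)(1-\rho)+\mathcal N_d(d_b(\rho))B/\tau$ with $d_b(\rho)$ increasing in $\rho$ — is dominated by its value at $\rho=\rho_B$. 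Hence, without loss, $\alpha_a=\alpha_a^\star$, $c_p^\star=(1-\alpha_a^\star)c$, and $\rho^\star\le\min(\rho_B,1)$.

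It then remains to maximize $F(\rho):=(\alpha_b^\star c-p+\hat d_b(\alpha_a^\star,\rho))(1-\rho)$ over $\rho\in[0,\min(\rho_B,1)]$ (intersected with the set where the transmit power stays positive); its unconstrained maximizer is, by definition, $\rho_r$. The proof finishes once $F$ is shown to be unimodal — non-decreasing up to $\rho_r$ and non-increasing afterwards — since then the constrained maximizer is $\min(\rho_r,\rho_B)$ (using $\rho_r<1$) and $d_b^\star=\hat d_b(\alpha_a^\star,\rho^\star)$, exactly as claimed. I would obtain this unimodality from the curvature facts in Section~\ref{sec:battery_model}: the available internal energy $\mathcal N_c(c_p^\star)c_p^\star\rho\tau+B_0$ is affine in $\rho$ and $d_p\mapsto d_p/\mathcal N_d(d_p)$ is convex, which — together with the eventual saturation of $\hat d_b$ at $D_p$, beyond which $F$ is affine and decreasing — forces $F'$ to change sign at most once, and then only from $+$ to $-$. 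This unimodality step (equivalently: that clipping the unconstrained optimum $\rho_r$ at the capacity threshold $\rho_B$ is exactly optimal) is the only part beyond routine bookkeeping and is where I expect the main effort; the sign-of-$(c-p)$ cases, the $D_p$ cap, and the positive-power restriction on $\rho$ each require a separate but easy check.
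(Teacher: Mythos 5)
Your proposal is correct and follows essentially the same variable-elimination route as the paper's own proof: fix $\gamma^*=0$ via Lemma~\ref{lemma:transmitpower}, reduce the objective to $(\alpha_b c-p+d_b)(1-\rho)$, bind the energy-causality constraint to get $d_b=\hat d_b(\alpha_a,\rho)$, set $\alpha_b^*=1$ and $\alpha_a^*=\argmax(\mathcal{N}_c((1-\alpha)c)(1-\alpha)c)$ by monotonicity, and finish with a one-dimensional maximization over $\rho$ with the capacity constraint acting as the upper bound $\rho_B$. The one step you flag but leave as a sketch --- unimodality of $F(\rho)$, which is what licenses writing the constrained maximizer as $\min(\rho_B,\rho_r)$ rather than $\argmax_{\rho\le\rho_B}F(\rho)$ --- is also passed over silently in the paper's proof (which simply states that the capacity constraint ``puts an upper bound on $\rho$''), so your treatment is, if anything, slightly more explicit about where the remaining work lies.
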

\begin{proof}
	See Appendix B.
\end{proof}
Theorem \ref{thm:opt_save_ratio_gen} says that in the optimal solution, the battery is charged at the optimal rate in the charging phase. Recall that no information is transmitted in the charging phase. In the transmitting phase, the information is transmitted with the power drawn from the battery and the EH source. 
 Since the optimal external charging rate of the battery, $(1-\alpha_a^*)c$, may be lower than the harvested power, $c$ and because the transmission is not carried out in the charging phase, the remaining $\alpha_a^*c$ \si{\watt} gets wasted.} 


So far we did not impose any constraint on the channel bandwidth. {Now, recall that the channel bandwidth is $W\; \si{\hertz}$} and note that we need to transmit $N_s$ symbols during $[\rho\tau,\tau]$, i.e., in $(1-\rho)\tau$ seconds.  Hence the Nyquist bandwidth\footnote{considering a raised cosine pulse shaping filter with unity roll-off factor.} is $N_s/(1-\rho)\tau$. 
Since,  the signal bandwidth has to be less than the channel bandwidth, we must have $N_s/(1-\rho)\tau \leq W$, i.e., 
\begin{align}\label{eq:BW}
\rho \leq 1-\frac{N_s}{W\tau}=\rho_{W}
\end{align}   Hence, the optimal TSR  with the bandwidth constraint is $\rho^*_{\mathrm{BW}}=\min(\rho^*,\rho_W)$, where $\rho^*$ is obtained from Theorem  \ref{thm:opt_save_ratio_gen}.

\section{Multiple Frame Average Rate Optimization } \label{optimal}
{In this section, we consider the problem of average rate maximization across multiple communication frames with the number of frames denoted by $N$.} 
The harvested power in any frame $i$ is assumed to be a random variable $C_i$ with a finite support, i.e., $0 \leq C_i <\infty, \, i=1,\ldots,N$.  We assume that the random variables, $C_1$, $C_2$,$\dots$, $C_{N}$, are independent and identically distributed and that they do not change within a frame. 
Further, we assume that the channel power gains, denoted by $H_1$, $H_2$,$\dots$, $H_{N}$, in frames $1, 2, \ldots,N$, respectively, are independent and identically distributed. The frame duration is assumed to be $\tau$ for all the frames.  
\subsection{Problem Formulation} \label{offline}
First, we consider the off-line optimization under the assumption that the harvested power  and channel gains are \emph{a priori} known at the transmitter as in \cite{DWF,DGP,inefficiency,Rui_TI, Rui_CET}.
The optimal throughput under the off-line optimization gives an upper bound for the optimal throughput in all on-line algorithms. 
Let the realizations of  harvested powers and channel power gains in frames, $1,\ldots,N$, be $c_1,\ldots, c_N$, and $h_1,\ldots,h_N$,  respectively.
The average throughput across $N$ frames is given by\footnote{ where any bold symbol $\mathbf{x}=\{x_1,\ldots,x_N\}$. }, 
\begin{align}\label{eq:avgRate}
R_{\mathrm{avg}}(\boldsymbol{\rho},\boldsymbol{\alpha}_a,\boldsymbol{\alpha}_b,\boldsymbol{\gamma},\mathbf{d}_b)=\frac{1}{N}\sum_{i=1}^{N}R(\rho_i, \alpha_{a_i},\alpha_{b_i}, \gamma_i,d_{b_i})
\end{align} 
where $R(.)$ is given by  \eqref{eq:Rateind} with $h=h_i$ for frame $i$. Note that all the variables carry their usual meanings except that they are now indexed by the frame indexes.
 
To maximize the average information rate across $N$ frames, we must thus solve the following optimization problem:
\begin{subequations}\label{eq:average_rate1}
	\begin{alignat}{2}
	\text{P1}:\quad   \underset{\boldsymbol{\rho},\boldsymbol{\alpha}_a,\boldsymbol{\alpha}_b,\boldsymbol{\gamma},\mathbf{d}_b}{\text{maximize}} \; & R_{\mathrm{avg}}(\boldsymbol{\rho},\boldsymbol{\alpha}_a,\boldsymbol{\alpha}_b,\boldsymbol{\gamma},\mathbf{d}_b)    \\  
 \textrm{subject to}\;	&  \sum_{k=1}^{i}\left(\tilde{d}_{b_k}(1-\rho_k) -\tilde{c}_{a_k}\rho_k-\tilde{c}_{b_k}(1-\rho_k)\right)\tau-B_0 \leq 0 \label{eq:gc11}  \\
	&  B_0+\sum_{k=1}^{i}\left(\tilde{c}_{a_k}\rho_k+\tilde{c}_{b_k}(1-\rho_k) -\tilde{d}_{b_k}(1-\rho_k)\right)\tau -B \leq 0 \label{eq:gc12} \\
	& {\alpha_{c_i} \leq \alpha_{a_i}, \alpha_{b_i}  \leq 1}, \;\; 0\leq \rho_i \leq \rho_{W}  \label{eq:sc5}\\
	& (1-\alpha_{b_i})d_{b_i}=0, \; 0\leq d_{b_i} \leq D_p \label{eq:sc6}
	\end{alignat}
\end{subequations}
for $i=1,\ldots,N$, where $ \alpha_{c_i}=1-C_p/c_i$ and,  $\tilde{c}_{a_k} =(1-\alpha_{a_k}) c_k\mathcal{N}_c((1-\alpha_{a_k})c_k)$, $\tilde{c}_{b_k} =(1-\alpha_{b_k}) c_k\mathcal{N}_c((1-\alpha_{b_k})c_k)$ are
concave functions in $\alpha_{a_k}$ and $\alpha_{b_k}$, respectively and, they  specify the internal charging power of the battery over the time durations $[0,\rho_k\tau)$ and $[\rho_k\tau,\tau]$ in any frame $k$, respectively. The internal discharge power $\tilde{d}_{b_k}=d_{b_k}/\mathcal{N}_{d}(d_{b_k})$ in any frame $k$ over $[\rho_k\tau,\tau]$ is  a convex function of $d_{b_k}$  and  $B_0$ is the initial energy stored in the battery. The constraints in \eqref{eq:gc11} and \eqref{eq:gc12} are energy causality and battery capacity constraints, respectively.  Note that we have also included the  bandwidth constraint, maximum charge and discharge rate constraints and the constraint that the simultaneous charging and discharging is infeasible in  \eqref{eq:sc5} and \eqref{eq:sc6}.
As in the single frame case, we note the following.
\begin{lemma} \label{lemma:gamma}
	In the optimal policy,  $\gamma_i^*=0$  for all $i=1,\ldots, N$. 
\end{lemma}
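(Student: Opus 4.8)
The plan is to reduce the multiple-frame claim to the single-frame result already established in Lemma~\ref{lemma:transmitpower}. The key observation is that the energy causality constraint \eqref{eq:gc11} and the battery capacity constraint \eqref{eq:gc12} couple the frames only through the \emph{net} change in stored energy at the end of each frame, namely through the partial sums of $(\tilde{c}_{a_k}\rho_k+\tilde{c}_{b_k}(1-\rho_k)-\tilde{d}_{b_k}(1-\rho_k))\tau$. Crucially, none of these coupling terms involves $\gamma_i$: the variable $\gamma_i$ appears only inside $R(\rho_i,\alpha_{a_i},\alpha_{b_i},\gamma_i,d_{b_i})$ through the split of the $N_s$ symbols between the two sub-phases of frame $i$, exactly as in the single-frame problem P0. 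Likewise the bandwidth, maximum charge/discharge, and simultaneity constraints in \eqref{eq:sc5}--\eqref{eq:sc6} are separable across frames and free of $\gamma_i$.

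First I would fix an optimal policy $(\boldsymbol{\rho}^*,\boldsymbol{\alpha}_a^*,\boldsymbol{\alpha}_b^*,\boldsymbol{\gamma}^*,\mathbf{d}_b^*)$ for P1 and suppose, for contradiction, that $\gamma_j^*>0$ for some frame $j$. Then I would hold all variables for frames $i\neq j$ fixed and all variables $\rho_j,\alpha_{a_j},\alpha_{b_j},d_{b_j}$ for frame $j$ fixed as well, varying only $\gamma_j$. Under this restriction every constraint of P1 that does not already hold with the chosen values is unaffected (since the coupling sums and the per-frame constraints do not see $\gamma_j$), so $\gamma_j$ ranges freely over $[0,1)$, and the objective $R_{\mathrm{avg}}$ changes only through the single term $\tfrac{1}{N}R(\rho_j,\alpha_{a_j}^*,\alpha_{b_j}^*,\gamma_j,d_{b_j}^*) = \tfrac{1}{N}\bigl(\gamma_j R_{a_j}+(1-\gamma_j)R_{b_j}\bigr)$, where $R_{a_j}$ and $R_{b_j}$ are the per-symbol rates with symbol powers $P_{a_j}=(\alpha_{a_j}^*c_j-p)\rho_j^*\tau/(\gamma_j N_s)$ and $P_{b_j}=(\alpha_{b_j}^*c_j-p+d_{b_j}^*)(1-\rho_j^*)\tau/((1-\gamma_j)N_s)$. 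This is precisely the single-frame sub-problem analysed in the proof of Lemma~\ref{lemma:transmitpower} (Appendix~A), so I would invoke that argument verbatim: by the concavity of $\log(1+h\cdot)$ and Jensen's inequality (equivalently, by the strict concavity of $x\mapsto x\log(1+a/x)$ established there), pooling all $N_s$ symbols into the transmitting phase, i.e.\ setting $\gamma_j=0$, does not decrease the objective, and strictly increases it whenever $\gamma_j^*>0$ unless the rates coincide --- in which case $\gamma_j^*=0$ is still optimal. Either way, an optimal policy with $\gamma_j=0$ exists; repeating over all $j$ (or arguing that the above modification can be made simultaneously in every frame, since the modifications are independent) yields $\gamma_i^*=0$ for all $i$.

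The only point needing a little care --- and the main obstacle --- is that I must make sure the feasible set for $\gamma_j$ after fixing the other variables is genuinely the full interval permitted in the single-frame analysis, i.e.\ that no P1 constraint secretly restricts $\gamma_j$ through, say, the requirement $P_{a_j}\ge 0$. As in the single-frame case this is handled by the convention that $\gamma_j=0$ whenever $(\alpha_{a_j}^*c_j-p)\rho_j^*\tau\le 0$, so that the degenerate-power case forces $\gamma_j=0$ outright and otherwise the optimization over $\gamma_j\in[0,1)$ is exactly the unconstrained concave problem of Appendix~A. With that caveat noted, the proof is a one-line reduction: ``Since $\gamma_i$ enters neither the coupling constraints \eqref{eq:gc11}--\eqref{eq:gc12} nor the per-frame constraints \eqref{eq:sc5}--\eqref{eq:sc6}, the maximization over $\gamma_i$ decouples across frames and reduces to the single-frame problem; the claim then follows from Lemma~\ref{lemma:transmitpower}.'' I would also remark, mirroring the statement after Lemma~\ref{lemma:transmitpower}, that consequently all $N_s$ symbols of each frame are transmitted in its second phase at power $(\alpha_{b_i}^*c_i-p+d_{b_i}^*)(1-\rho_i^*)\tau/N_s$, so the per-frame symbol counts remain integral.
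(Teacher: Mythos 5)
There is a genuine gap at the heart of your reduction: with $\rho_j$, $\alpha_{a_j}$, $\alpha_{b_j}$, $d_{b_j}$ all held fixed, the maximization over $\gamma_j$ alone does \emph{not} return $\gamma_j=0$ in the nontrivial case. Write $E_a=(\alpha_{a_j}c_j-p)\rho_j\tau$ and $E_b=(\alpha_{b_j}c_j-p+d_{b_j})(1-\rho_j)\tau$ for the transmit energies available in the two sub-phases; your restricted objective is $f(\gamma)=\gamma\log\left(1+h_jE_a/(\gamma N_s)\right)+(1-\gamma)\log\left(1+h_jE_b/((1-\gamma)N_s)\right)$. Jensen's inequality bounds $f(\gamma)$ by $\log\left(1+h_j(E_a+E_b)/N_s\right)$, but this bound is \emph{attained} at the interior point $\gamma=E_a/(E_a+E_b)$ (where $P_a=P_b$), not at $\gamma=0$. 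At $\gamma=0$ the phase-one direct-path surplus $E_a$ is simply discarded --- the direct path has no storage, and with $\alpha_{a_j}$ and $\rho_j$ frozen that energy cannot be rerouted to the battery or to the second phase --- so $f(0)=\log(1+h_jE_b/N_s)$, which is strictly smaller than $f\left(E_a/(E_a+E_b)\right)$ whenever $E_a>0$. In exactly the case you must rule out ($\alpha_{a_j}c_j>p$ and $\rho_j>0$), your restricted problem therefore prefers $\gamma_j>0$; the ``one-line reduction'' proves the opposite of the claim.

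The lemma is nevertheless true because the dominating policy changes $\rho_j$, not just $\gamma_j$; this is what both Appendix A and the paper's Appendix C actually do. In the single-frame proof the Jensen step is followed by inequality (b), which shows the pooled power $\rho(\alpha_ac-p)+(1-\rho)(\alpha_bc-p+d_b)$ is itself dominated by what a $\rho=0$ policy can deliver, so any policy with $R_a>0$ is beaten by one with no charging phase at all. In the multi-frame setting the paper fixes the residual battery levels $B_{i-1},B_i$ --- the only quantities through which frame $i$ interacts with the others; this part of your decoupling observation is correct and is indeed the paper's starting point --- and then argues that for the same net energy exchange, replacing $\rho_i>0$ by $\rho_i=0$ improves frame $i$'s rate whenever $\alpha_{a_i}^*c_i>p$. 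Hence either $\alpha_{a_i}^*c_i\le p$ or $\rho_i^*=0$, and in both cases $\gamma_i^*=0$ follows from the convention that $\gamma_i=0$ when $(\alpha_{a_i}c_i-p)\rho_i\tau\le 0$. To repair your argument you would need to perturb $(\rho_j,\alpha_{a_j},\alpha_{b_j},d_{b_j})$ jointly subject to preserving $B_{j-1}-B_j$, rather than freezing those variables and varying $\gamma_j$ alone.
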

\begin{proof}
	See Appendix  C. 
\end{proof}
Hence, $\gamma_i$'s are no longer  optimization variables.  Hence, the transmit power $P_i$ in any frame $i$ is equal to $(\alpha_{b_i}c_i-p+d_{b_i})(1-\rho_i)\tau$. 
Clearly, P1 in \eqref{eq:average_rate1} is a non-convex optimization problem due to the non-convex constraint in \eqref{eq:gc12} and  due to the coupling of $\rho_i$'s with $\gamma_i$'s, $d_{b_i}$'s,  $\alpha_{a_i}$'s and $\alpha_{b_i}$'s. 

In the following, we first solve the problem when the circuit cost is zero  and get some interesting insights on the optimal solution. We then approximately solve the problem when the circuit cost is non-zero. 
\subsection{Zero Circuit Cost ($p=0$) Case }
Since energy is not expended for the circuit operation during the transmission, in this case, we can transmit the coded symbols for the entire frame duration. Hence, $\rho_i$'s and ${\alpha}_{a_i}$'s are no longer  optimization variables. In this case, the optimization problem P1 in \eqref{eq:average_rate1} can be reformulated as
 \begin{subequations}\label{eq:zp}
	\begin{alignat}{2}
	P2:\quad \underset{\substack{\alpha_{b_i}, d_{b_i} \\ i=1,\ldots, N}}{\text{minimize}} \quad &  -\frac{1}{N}\sum_{i=1}^{N}\log\left(1+h_i(\alpha_{b_i}c_i+d_{b_i})\tau/N_s\right)  \\   \textrm{subject to} \quad	& \quad\eqref{eq:gc11}, \eqref{eq:gc12}, 0\leq d_{b_i}\leq D_p,  \alpha_{c_i}\leq \alpha_{b_i}\leq 1  , \quad i=1,\ldots,N  \label{eq:p2c5} 
	\end{alignat}
\end{subequations}
where the constraints should be self-explanatory.
In general, P2 is non-convex due to the non-convex constraint \eqref{eq:gc12}. 

When the channel gains remain constant across the frames, i.e., $h_i=h, \; i=1,\ldots, N$ and when the battery capacity is infinite, we make an interesting observation in the following theorem. 
\begin{theorem}\label{thm:zeropower}
	Consider any two frames, $j$ and $k \;(k>j)$ such that the battery has a non-zero residual energy in and between the frames $j$ and $k$. Then, while the battery is being charged, i.e., $\alpha_{b_j}, \alpha_{b_k}<1$, or the battery is being discharged, i.e., $d_{b_j}, d_{b_k} >0$,  the optimal transmit power  is a strictly monotonically increasing function of the harvested power, i.e.,  $c_j<c_k$ implies $P_j<P_k$. 
\end{theorem}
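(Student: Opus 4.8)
The plan is to prove the claim by a local exchange (perturbation) argument that exploits the slackness of the energy-causality constraints over the block of frames lying between $j$ and $k$. First I would record the simplifications that hold under the hypotheses: with $p=0$ we have $\rho_i=0$, so (after Lemma~\ref{lemma:gamma}) the transmit power of frame $i$ is proportional to $\alpha_{b_i}c_i+d_{b_i}$; writing $x_i=(1-\alpha_{b_i})c_i$ for the external charging power, the internal charging power is $\tilde c_{b_i}=g(x_i):=x_i\mathcal{N}_c(x_i)$, a concave, increasing function of $x_i$, and the internal discharge power is $\tilde d_{b_i}=q(d_{b_i}):=d_{b_i}/\mathcal{N}_d(d_{b_i})$, a convex, increasing function of $d_{b_i}$ (Fig.~\ref{fig:CEDEa}). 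Since the battery is infinite, \eqref{eq:gc12} is never active, and the assumption that the residual energy is positive in and between frames $j$ and $k$ says precisely that the causality constraints \eqref{eq:gc11} for the indices $j,j+1,\dots,k-1$ are strictly slack.

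Next I would consider an energy-neutral perturbation supported only on frames $j$ and $k$. In the charging case ($\alpha_{b_j},\alpha_{b_k}<1$, $d_{b_j}=d_{b_k}=0$), replace $x_j\mapsto x_j+\epsilon$ and $x_k\mapsto x_k-\delta(\epsilon)$, where $\delta(\epsilon)$ is chosen so that $g(x_j+\epsilon)+g(x_k-\delta)=g(x_j)+g(x_k)$, i.e. the total internal energy injected by the two frames is unchanged (so $\delta'(0)=g'(x_j)/g'(x_k)$). I would then check feasibility of \eqref{eq:gc11}: constraints with index below $j$ are untouched; those with index in $\{j,\dots,k-1\}$ receive only the extra term $g'(x_j)\epsilon$ and stay feasible for small $|\epsilon|$ by the slackness noted above; and those with index $\ge k$ are unchanged because the two-frame internal total is preserved. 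Differentiating the two affected $\log$-terms of the objective of P2 and using that both signs of $\epsilon$ are admissible at an interior optimum yields the first-order condition $g'(x_j)\,(1+hP_j)=g'(x_k)\,(1+hP_k)$. The discharging case ($\alpha_{b_j}=\alpha_{b_k}=1$, $d_{b_j},d_{b_k}>0$) is handled identically with $d_{b_i}$ in place of $x_i$ and $q$ in place of $g$, giving $q'(d_{b_j})\,(1+hP_j)=q'(d_{b_k})\,(1+hP_k)$.

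Finally I would close by contradiction. Assume $c_j<c_k$ but $P_j\ge P_k$; then $1+hP_j\ge 1+hP_k>0$. In the charging case the first-order condition forces $g'(x_j)\le g'(x_k)$, and since $g$ is strictly concave, $g'$ is strictly decreasing, so $x_j\ge x_k$. But $P_i\propto c_i-x_i$, so $P_j\ge P_k$ gives $c_j-x_j\ge c_k-x_k$, which with $x_j\ge x_k$ yields $c_j\ge c_k$ --- a contradiction. In the discharging case one uses strict convexity of $q$ instead: $q'(d_{b_j})\le q'(d_{b_k})$ forces $d_{b_j}\le d_{b_k}$, and $P_i\propto c_i+d_{b_i}$ gives $c_j+d_{b_j}\ge c_k+d_{b_k}$, again contradicting $c_j<c_k$. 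Strict monotonicity then follows from the same chain: $P_j=P_k$ would force $x_j=x_k$ (resp. $d_{b_j}=d_{b_k}$) by strict monotonicity of $g'$ (resp. $q'$), hence $c_j=c_k$.

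I expect the main obstacle to be the careful bookkeeping in the feasibility step --- in particular insisting that the perturbation be neutral with respect to the internal powers $g(x_i)$ / $q(d_{b_i})$ (not the external powers), so that every partial-sum constraint in \eqref{eq:gc11} from index $k$ onward is exactly preserved --- together with the boundary situations in which $x_i$ or $d_{b_i}$ sits at a limit (e.g. $x_i=C_p$, i.e. $\alpha_{b_i}=\alpha_{c_i}$, or $d_{b_i}=D_p$). At such a point only one-sided perturbations are available and the first-order equality weakens to an inequality, so one must check that this inequality still propagates through the monotonicity of $g'$/$q'$, or argue that these boundary configurations are incompatible with $c_j<c_k$ together with $P_j\ge P_k$. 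A minor point is that the strict version of the statement relies on strict concavity of $g$ and strict convexity of $q$, which hold for the efficiency models adopted here.
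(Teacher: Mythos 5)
Your proposal is correct, and it reaches the paper's conclusion by a genuinely different (dual-free) route. The paper forms the Lagrangian of P2, invokes KKT, and uses complementary slackness to argue that the multipliers $\lambda_i$ of the causality constraints vanish on frames where the battery holds residual energy, so that the stationarity condition \eqref{eq:txpower} reduces to the invariance of $g'(x_i)\,(1+h P_i)$ (in your notation) across the block; it then closes with a somewhat opaque contradiction on the values of $\mathcal{N}_c$ via \eqref{eq:in}. You obtain exactly the same first-order invariant by a primal perturbation: an internal-energy-neutral exchange between frames $j$ and $k$, feasible in both directions because the strictly positive residual energy makes the partial-sum constraints \eqref{eq:gc11} slack for indices $j,\dots,k-1$ and the exchange preserves every partial sum from $k$ onward. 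Your closing step is cleaner than the paper's: from $g'(x_j)\le g'(x_k)$ and strict concavity you get $x_j\ge x_k$, which together with $P_i\propto c_i-x_i$ immediately contradicts $c_j<c_k$ (and symmetrically with $q$ and $d_{b_i}$ in the discharge case). What the paper's KKT route buys is a single framework that also dispatches the boundary cases ($\alpha_{b_i}=\alpha_{c_i}$, $d_{b_i}=D_p$) through the multipliers $\mu_i,\nu_i,\delta_i$, whereas you must treat one-sided perturbations separately, as you acknowledge. One small inaccuracy to fix: $g(x)=x\mathcal{N}_c(x)$ is concave but \emph{not} globally increasing on $[0,C_p]$ (it has an interior maximizer --- this is precisely why $\alpha_a^*$ is defined as an argmax); you should add the one-line observation that any optimal $x_i$ must lie in the region $g'(x_i)\ge 0$, since otherwise decreasing $x_i$ raises both the transmit power and the stored energy, and note that the degenerate case $g'(x_k)=0$ still forces $g'(x_j)=0$, hence $x_j=x_k$ and $P_j<P_k$ directly.
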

\begin{proof}
See Appendix D. 
\end{proof}
With the assumption that the battery efficiencies are independent of  the charge and discharge rates, it has been shown in \cite{inefficiency} that the optimal power allocation has a double threshold structure: the optimal transmit power does not vary with the harvested power whenever the harvested power is above an upper threshold or below a lower threshold. It is interesting to note that if the battery efficiencies vary with the charge and discharge rates as a result of non-zero internal resistance, the optimal transmit power strictly monotonically increases with the harvested power and does not exhibit the simple threshold structure observed with the fixed efficiency model in \cite{inefficiency}.

\subsection{Non-Zero Circuit Cost ($p>0$) Case}
Recall that P1 in \eqref{eq:average_rate1} is non-convex when the circuit cost is non-zero. 
Hence, analytically solving  P1 is challenging.  In the rest of the section, we approximately solve P1 by considering an upper bounding function of the discharge efficiency curve in Fig. \ref{fig:CEDEb}.  We define the following bounding function which is referred to as the step discharge model: $\mathcal{N}_d(d_b)=\mathcal{N}_{d_0}$ if $d_b\leq D_p$; $\mathcal{N}_d(d_b)=0$ otherwise, where $D_p$ is the maximum discharge rate.  

We can now eliminate the coupling between $d_{b_i}$'s and $\rho_i$'s by substituting $e_{b_i}=d_{b_i}(1-\rho_i)\tau$.  The constraint on the discharge rate in the step discharge model can be re-written as,
\begin{align}\label{eq:e}
e_{b_i}=d_{b_i}(1-\rho_i)\tau\leq D_p(1-\rho_i)\tau,\; i=1,\ldots,N
\end{align}  
To eliminate the coupling between $\rho_i$'s and $\alpha_{a_i}$'s, we make the following observation.
\begin{lemma}\label{lemma:optALPHA}
	Let $\alpha_{a_i}^*=\argmax_{\alpha_{c_i}\leq \alpha\leq 1}\left( \mathcal{N}_c((1-\alpha)c_i)(1-\alpha)c_i\right)$. Then, $R_{\mathrm{avg}}(\boldsymbol{\alpha}_a)\leq R_{\mathrm{avg}}(\boldsymbol{\alpha_{a}^*})$ for any given $\{\rho_i,\alpha_{b_i},e_{b_i}\}_{i=1}^N$. 
\end{lemma}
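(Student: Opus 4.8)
The plan is to exploit the fact that, once Lemma~\ref{lemma:gamma} has fixed $\gamma_i^*=0$, the charging-phase ratios $\boldsymbol{\alpha}_a$ no longer appear in the objective: in P1 they enter \emph{only} through the internal charging powers $\tilde c_{a_k}=(1-\alpha_{a_k})c_k\,\mathcal{N}_c\big((1-\alpha_{a_k})c_k\big)$, and these occur only in the energy-causality constraint~\eqref{eq:gc11} and the capacity constraint~\eqref{eq:gc12}, never in $R_{\mathrm{avg}}$. Hence the asserted inequality $R_{\mathrm{avg}}(\boldsymbol{\alpha}_a)\leq R_{\mathrm{avg}}(\boldsymbol{\alpha}_a^*)$ is really a feasibility claim: I will show that whenever $\{\rho_i,\alpha_{b_i},e_{b_i}\}_{i=1}^N$ is feasible together with some $\boldsymbol{\alpha}_a$, it remains feasible together with $\boldsymbol{\alpha}_a^*$, and then the two feasible points attain exactly the same $R_{\mathrm{avg}}$, since that value is determined by $\{\rho_i,\alpha_{b_i},e_{b_i}\}$ alone.

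Next I would record the extremal property behind the definition of $\alpha_{a_i}^*$. Substituting $x=(1-\alpha)c_i$ for the external charging power, the admissible range $\alpha\in[\alpha_{c_i},1]$ corresponds exactly to $x\in[0,C_p]$ (recall $\alpha_{c_i}=1-C_p/c_i$), and on this interval $x\mapsto x\,\mathcal{N}_c(x)$ is the concave internal-charging-power map of Section~\ref{sec:battery_model}; it therefore has a maximizer, and $\alpha_{a_i}^*$ is one by definition. Consequently $\tilde c_{a_i}(\alpha_{a_i}^*)\geq \tilde c_{a_i}(\alpha_{a_i})$ for every admissible $\alpha_{a_i}$, i.e. passing to $\boldsymbol{\alpha}_a^*$ (weakly) increases the energy deposited internally in each charging phase while leaving every transmit-phase quantity untouched (those are governed by $\alpha_{b_i},\rho_i,e_{b_i}$, all held fixed). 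In~\eqref{eq:gc11} the term $-\tilde c_{a_k}\rho_k\tau$ carries a minus sign for each $k\leq i$, so raising the $\tilde c_{a_k}$'s only lowers the left-hand side; energy causality is thus relaxed and still holds.

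The delicate step — which I expect to be the main obstacle — is the capacity constraint~\eqref{eq:gc12}, where $\tilde c_{a_k}$ enters with a plus sign, so charging at the efficiency-optimal rate could a priori overshoot the cap $B$. I would dispose of this with the standard battery-overflow observation: a battery cannot hold more than $B$ joules, so internal power in excess of what keeps the stored charge at $B$ is simply dissipated and never reaches the load; equivalently, one reads the running stored energy as its truncation at $B$. Under this (physically forced, hence without loss of generality) reading, the trajectory induced by $\boldsymbol{\alpha}_a^*$ lies in $[0,B]$ by construction, and since before truncation it dominates the feasible — hence everywhere nonnegative and $\leq B$ — trajectory induced by $\boldsymbol{\alpha}_a$ pointwise in time (a routine monotonicity comparison, using that the two trajectories have identical net rates in every transmit phase), after truncation it still never drops below it, so it never leaves less energy available for discharge and energy causality keeps holding. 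Hence $(\boldsymbol{\rho},\boldsymbol{\alpha}_a^*,\boldsymbol{\alpha}_b,\mathbf{e}_b)$ is feasible and attains $R_{\mathrm{avg}}(\boldsymbol{\alpha}_a)$, which proves the lemma; as a by-product, the surplus $\alpha_{a_i}^*c_i$ of harvested power not routed to the battery is simply wasted, consistent with the remark after Theorem~\ref{thm:opt_save_ratio_gen}.
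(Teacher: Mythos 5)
Your argument is correct, but it reaches the conclusion by a genuinely different route than the paper. The paper's proof is a two-line appeal to Theorem~\ref{thm:opt_save_ratio_gen}: for each frame with $\rho_i>0$, the single-frame analysis already shows that charging at $\alpha_{a_i}^*$ maximizes the internally stored energy and hence the achievable rate of that frame ``without impacting the rates in the other frames,'' while for $\rho_i=0$ the variable $\alpha_{a_i}$ is vacuous. You instead observe that, once Lemma~\ref{lemma:gamma} has removed $\gamma_i$ and the objective has been rewritten in terms of $\{\rho_i,\alpha_{b_i},e_{b_i}\}$, the quantities $\alpha_{a_i}$ enter only through $\tilde{c}_{a_k}$ in the constraints, so the lemma reduces to a feasibility-domination claim; you then verify that maximizing each $\tilde{c}_{a_k}$ relaxes the causality constraint \eqref{eq:gc11} and preserves the rate exactly. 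This is more self-contained (it does not lean on the single-frame theorem) and is arguably a more faithful reading of the lemma as stated, since with $e_{b_i}$ fixed the rate cannot strictly increase --- it can only remain achievable.

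One point deserves flagging, and it cuts in your favor rather than against you. The battery capacity constraint \eqref{eq:gc12} is the only place where $\tilde{c}_{a_k}$ appears with a positive sign, and neither the paper's proof of this lemma nor its proof of Lemma~\ref{lemma:rgtzero} addresses it (the latter explicitly assumes \eqref{eq:gc12} is inactive). You confront it via the overflow/truncation reading of the stored-energy trajectory. That argument is sound physically and your induction (truncation preserves pointwise dominance, hence causality) is correct, but be aware that it quietly replaces the hard inequality \eqref{eq:gc12} of P1/P3 with truncated storage dynamics; under the literal formulation, switching to $\boldsymbol{\alpha}_a^*$ with all other variables frozen can render the point infeasible, and one would instead have to back off the charging rate in the offending frames (which still yields a feasible point with the same rate, so the lemma's intended use --- that an optimal solution may be taken with $\boldsymbol{\alpha}_a=\boldsymbol{\alpha}_a^*$ whenever the capacity cap is slack --- survives). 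Stating that caveat explicitly would make your proof strictly more rigorous than the one in the paper.
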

\begin{proof}
See Appendix E. 
\end{proof}
Lemma \ref{lemma:optALPHA} implies that in the optimal solution to P1, we must have $\boldsymbol{\alpha}_a=\boldsymbol{\alpha}_a^*$ always. 
To eliminate the coupling between $\alpha_{b_i}$'s and $\rho_i$'s, we make the following observation  which says that it is not optimal to charge the battery in the second part of the frame whenever $\rho_i>0$. 
\begin{lemma}\label{lemma:rgtzero}
	In the optimal policy,  if  the optimal $\rho_i>0$, then the optimal $\alpha_{b_i}=1$ and if the optimal $\rho_i=0$, then $\alpha_{c_i}\leq\alpha_{b_i}\leq1$ in the optimal case.  
\end{lemma}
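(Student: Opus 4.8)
The second assertion carries no content beyond the feasibility range already imposed by \eqref{eq:sc5}, so the substance of the lemma is the first one: in any optimal policy, $\rho_i>0$ forces $\alpha_{b_i}=1$. (In the single--frame problem of Theorem \ref{thm:opt_save_ratio_gen} one has $\alpha_b^*=1$ unconditionally, because energy banked in a frame cannot be spent once the frame ends; here the hypothesis $\rho_i>0$ plays the analogous role.) I would argue by contradiction. Suppose an optimal policy has a frame $i$ with $\rho_i>0$ and $\alpha_{b_i}<1$; by \eqref{eq:sc6} then $d_{b_i}=0$, so in frame $i$ the battery is only charged---in both sub-intervals---and never discharged, and we may assume $P_i>0$ (the case $P_i=0$ contributes nothing to $R_{\mathrm{avg}}$, and then $\alpha_{b_i}=1$ may be adopted whenever it preserves feasibility). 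A fact used repeatedly is $\tilde c_{b_i}\le\tilde c_{a_i}$: by the definition in Lemma \ref{lemma:optALPHA}, $\alpha_{a_i}^*$ maximizes the internal charging power $c_p\mapsto c_p\mathcal{N}_c(c_p)$, and $\tilde c_{b_i}$ is the value of that same function at the feasible rate $(1-\alpha_{b_i})c_i$.

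\emph{Step 1 (a one-sided perturbation that settles most cases).} Shorten the charging sub-interval, $\rho_i\mapsto\rho_i-\epsilon$ for small $\epsilon>0$, keeping $\alpha_{a_i}=\alpha_{a_i}^*$, $\alpha_{b_i}$ and $d_{b_i}=0$ fixed. The transmitting sub-interval lengthens, so the transmit energy $(\alpha_{b_i}c_i-p)(1-\rho_i+\epsilon)\tau$ and hence $R_i=\log(1+h_iP_i)$ strictly increase, while the total internal energy banked in frame $i$ drops by $(\tilde c_{a_i}-\tilde c_{b_i})\epsilon\tau\ge 0$. The battery level within frame $i$ is monotone and now only smaller, so no within--frame bound and no constraint for frames $k<i$ is affected; \eqref{eq:gc12} for $k\ge i$ is relaxed, only the causality constraints \eqref{eq:gc11} for $k\ge i$ are tightened (by $(\tilde c_{a_i}-\tilde c_{b_i})\epsilon\tau$), and the bandwidth cap is untouched. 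Hence, if every causality constraint for $k\ge i$ has strictly positive slack, this perturbation is feasible for small $\epsilon$ and strictly improves $R_{\mathrm{avg}}$---a contradiction. So in the remaining case, some causality constraint for a frame $k\ge i$ is tight; let $k^*$ be the smallest such index, so the battery is empty at the end of frame $k^*$, and frames $i,\dots,k^*$ form a block in which the internally stored energy is conserved up to the empty endpoint.

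\emph{Step 2 (the block).} Because $\alpha_{b_i}<1$, frame $i$ nets positive internal energy into the battery; with the block ending empty, some later frame $j'\in\{i+1,\dots,k^*\}$ must discharge. Perform the coupled perturbation: raise $\alpha_{b_i}$ towards $1$ so as to bank $\delta$ less internal energy in frame $i$ (this routes $\delta/\mathcal{N}_c((1-\alpha_{b_i})c_i)>\delta$ more direct power to the transmitter over the transmitting sub-interval of frame $i$), and lower $d_{b_{j'}}$ so as to draw $\delta$ less internal energy in frame $j'$ (which, the step discharge model being linear, costs exactly $\delta\mathcal{N}_{d_0}<\delta$ of transmit energy in frame $j'$). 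The battery level between frames $i$ and $k^*$ is uniformly $\delta$ lower, which for small $\delta$ respects the still--slack causality constraints for $i\le k<k^*$, leaves \eqref{eq:gc12} relaxed, and restores the level to zero at the end of $k^*$; the bandwidth caps and \eqref{eq:sc6} are untouched. Thus strictly more transmit energy is created (in frame $i$) than is destroyed (in frame $j'$).

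\emph{Main obstacle.} The remaining step---showing that this exchange strictly increases $R_{\mathrm{avg}}$---is the crux. Since $x\mapsto\log(1+hx)$ is concave, shifting transmit energy from frame $j'$ to frame $i$ is not automatically beneficial merely because a net amount is created; one must compare the marginal rates $h_i/(1+h_iP_i)$ and $h_{j'}/(1+h_{j'}P_{j'})$. The plan is to invoke the water--filling--type monotonicity of the optimal power profile within a block between battery depletions (in the spirit of Theorem \ref{thm:zeropower}) to bound these marginals, and then combine that bound with the strict efficiency gap $\mathcal{N}_c,\mathcal{N}_{d_0}<1$ and the concavity of $c_p\mapsto c_p\mathcal{N}_c(c_p)$ to pin down the sign of $\Delta R_{\mathrm{avg}}$. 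Coordinating the intra--block optimality structure with the efficiency losses is where the real work lies; an alternative is to drive $\epsilon$ in Step 1 to the endpoint $\rho_i=0$ and argue directly that the resulting policy is feasible and at least as good, but that again reduces to the same concavity--versus--efficiency comparison.
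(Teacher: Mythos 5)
Your proposal does not close the argument: you yourself flag the ``main obstacle'' --- comparing the marginal rates $h_i/(1+h_iP_i)$ and $h_{j'}/(1+h_{j'}P_{j'})$ after a cross-frame energy exchange --- and leave it unresolved. That is precisely the step where the proof lives or dies, and the difficulty is baked into your choice of perturbation: both Step 1 (shrinking $\rho_i$ with $\alpha_{b_i}$ fixed) and Step 2 (raising $\alpha_{b_i}$ while reducing a later discharge) change the net internal energy banked during frame $i$, which forces a compensation in some other frame $j'$ and hence a concavity-of-$\log$ comparison across frames with different channel gains. Invoking water-filling monotonicity within the depletion block does not obviously rescue this, since the exchange moves energy \emph{against} as well as with the direction that the KKT structure controls, and the efficiency gaps $\mathcal{N}_c,\mathcal{N}_{d_0}<1$ do not by themselves dominate an unfavorable marginal-rate ratio.

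The paper avoids this entirely with a different, purely within-frame exchange: keeping $d_{b_i}=0$ and $\alpha_{a_i}=\alpha_{a_i}^*$, it replaces $(\rho_i,\alpha_{b_i})$ with $(\rho_i',1)$ where $\rho_i'=\rho_i+(1-\rho_i)\,\mathcal{N}_c^{(\alpha_{b_i})}/\mathcal{N}_c^{(\alpha_{a_i})}$ is chosen so that the \emph{internal energy stored during frame $i$ is exactly unchanged}, i.e.\ $B_i-B_{i-1}$ is preserved. All other frames are then untouched, the causality and capacity constraints are unaffected, and the only thing to compare is the transmit energy of frame $i$ under the two policies; since the single-frame rate is monotone in transmit energy, no concavity or marginal-rate argument is needed. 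The algebraic computation of $E_{\alpha_{b_i}'=1,\rho_i'}-E_{\alpha_{b_i},\rho_i}$, using the optimality of $\alpha_{a_i}^*$ for the internal charging power and $\mathcal{N}_c\leq 1$, shows the difference is nonnegative. I suggest you redesign your perturbation so that the battery trajectory outside frame $i$ is invariant; with that one change your contradiction argument goes through and the ``main obstacle'' disappears. (Your reading of the second assertion as contentless, and your observation that $d_{b_i}>0$ immediately forces $\alpha_{b_i}=1$ via the no-simultaneous-charge/discharge constraint, both agree with the paper.)
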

\begin{proof}
	See Appendix F.
\end{proof}
The above result implies that $(1-\alpha_{b_i})\rho_i=0$.  Hence, the rate in frame $i$ can be re-written as   $R(\rho_i, \alpha_{b_i},\alpha_{a_i}, e_{b_i})=\log\left(1+h_i((\alpha_{b_i}-\rho_i)c_i\tau-p(1-\rho_i)\tau+e_{b_i})/N_s\right)$, where we have substituted  $d_{b_i}(1-\rho_i)\tau$ by $e_{b_i}$  and  $\alpha_{b_i}(1-\rho_i)$ by $\alpha_{b_i}-\rho_i$.   
Hence,  P1 can be reformulated as,
\begin{subequations}\label{eq:approxA}
	\begin{alignat}{2}
	\text{P3}:  \underset{\substack{\rho_i, \alpha_{b_i}, e_{b_i} \\ i=1,\ldots, N}}{\text{minimize}} & \; -\frac{1}{N}\sum_{i=1}^{N}\log\left(1+h_i((\alpha_{b_i}-\rho_i)c_i\tau-p(1-\rho_i)\tau+e_{b_i})/N_s\right)  \\  
	\textrm{subject to}\;	&   \sum_{k=1}^{i}e_{b_k}/ \mathcal{N}_{d_0} - \rho_k\tilde{c}_{a_k}^*\tau-\tilde{c}_{b_k}(1-\rho_k)\tau-B_0 \leq 0  \label{eq:mc1} \\
	&  B_0+\sum_{k=1}^{i}(\rho_k\tilde{c}_{a_k}^* \tau+\tilde{c}_{b_k} (1-\rho_k)\tau -e_{b_k}/ \mathcal{N}_{d_0}-B \leq 0  \label{eq:mc2} \\
	&  0\leq \rho_i\leq  \rho_W                               \label{eq:c3}   \\
	&  0\leq  e_{b_i} \leq D_p(1-\rho_i) \tau                           \label{eq:c4}  \\
	& (1-\alpha_{b_i})\rho_i=0  , \; \alpha_{c_i}\leq \alpha_{b_i}\leq 1     \label{eq:c5}  
	\end{alignat}
\end{subequations}
 for $i=1,\ldots,N$, where $\tilde{c}_{a_k}^*= (1-\alpha_{a_k}^*) c_k\mathcal{N}_c((1-\alpha_{a_k}^*)c_k)$ and the  constraints \eqref{eq:BW}, \eqref{eq:e}, \eqref{eq:gc11} and  \eqref{eq:gc12} are re-written as \eqref{eq:c3}, \eqref{eq:c4}, \eqref{eq:mc1} and \eqref{eq:mc2},  respectively.

As a result of Lemma \ref{lemma:rgtzero}, we need to optimize only over $\alpha_{b_i}$ if $\rho_i=0$ and optimize only  over $\rho_i$ if $\rho_i> 0$, because the optimal $\alpha_{b_i}=1$ whenever $\rho_i> 0$. If we know whether $\rho_i=0$ or $\rho_i>0$ for any frame $i$ in the optimal solution, $\rho_i$ and $\alpha_{b_i}$ get decoupled and we can obtain the solution to P3 by solving the resulting convex optimization problem. {However, the challenge is to identify whether $\rho_i>0$ or $\rho_i=0$ for $i=1,\ldots,N$, as the size of the search space increases exponentially with the number of frames, $N$.} In the sequel, we give an approximate solution to P1 by approximately solving P3. 
To identify if $\rho_i>0$ or $\rho_i=0$ and solve P3 approximately,  we adopt the following technique. 

\begin{enumerate}[leftmargin=*]
\item  In order to eliminate the coupling between $\rho_i$'s and $\alpha_{b_i}$'s, set $\alpha_{b_i}=1$ for each $i=1,\ldots, N$, and solve the modified P3, which now is a convex optimization problem, to obtain the optimal solutions $\{\rho_{\{i,\alpha_{b, \{1\leq j\leq N\}}=1 \}}, e_{b,\{i,\alpha_{b, \{1\leq j\leq N\}}=1  \} }\}_{i=1}^N$. Let  $E_{\{i,\alpha_{b, \{1\leq j\leq N\}}=1  \} }$ and $R_{\{i,\alpha_{b, \{1\leq j\leq N\}}=1 \}}$  be the optimal transmit energy and rate in frame $i$, respectively. 
Let the total energy loss, i.e., sum of charging and circuit losses  to achieve rate $R_{\{i,\alpha_{b, \{1\leq j\leq N\}}=1  \}}$  in frame $i$ be  $L_{\{i,\alpha_{b, \{1\leq j\leq N\}}=1 \}}$.  
\item Then, we set $\rho_i=0$ and find $\tilde{\alpha}_{b_i}$ which results in the same transmit energy of $E_{\{i,\alpha_{b, \{1\leq j\leq N\}}=1  \} }$ as in Step (1) for each $i=1,\ldots,N$. Now, $\tilde{\alpha}_{b_i}$ may not be feasible due to the maximum charge rate constraint. Hence, we consider $\alpha_{b,\{i, \rho_i=0\}}=\max(\alpha_{c_i},\tilde{\alpha}_{b_i})$ where the term $\alpha_{c_i}$ accounts for the maximum charge rate constraint. Let the total loss incurred with PSR of $\alpha_{b,\{i, \rho_i=0\}}$ in frame $i$ be $L_{i,\rho_i=0}$.   For any frame $i$,  we set $\rho_i^*=0$  if $L_{i, \rho_i=0} \leq L_{\{i,\alpha_{b, \{1\leq j\leq N\}}=1  \}}$; set  $\alpha_{b_i}^*=1$ otherwise. In the previous step, due the assumption that the charging and the transmission are not done simultaneously, i.e., $\alpha_{b, \{1\leq j\leq N\}}=1$, the charging losses will be high in frames with \emph{high} harvested powers. In the current step, we try to reduce the loss while maintaining the transmit power same as the previous step.  
Further, we note that if any frame $i$ receives energy in Step (1) above, then $\alpha_{b_i}^*=1$ because the frame that receives energy in Step (1) must receive energy in any other policy that performs better than the performance of the policy in Step (1).   If a frame receives energy, it is not optimal to charge the battery while it is being discharged, hence, we set $\alpha_{b_i}^*=1$, if frame $i$ receives energy in Step (1). 
\item Suppose $e_{b_i}^*$ is the solution in the above steps, then we assign $d_{b_i}^*=\{\min(d_{b_i},D_p): d_{b_i}(1-\rho_i^*)\tau/\mathcal{N}_d(d_{b_i})=e_{b_i}^*\}$ for all the frames $i=1,\ldots,N$. 
\end{enumerate}

\begin{algorithm}[t]
	\caption{ {Proposed Algorithm for Approximately Solving P1}}
	\label{non-causal_gen}
	\begin{algorithmic}[1]
		\Procedure{energy-alloc}{$B_0$,\textbf{c, h, N}}
		\State Compute $\alpha_{a_i}^*=\argmax_{\alpha_{c_i}\leq \alpha\leq 1}\left( \mathcal{N}_c((1-\alpha)c_i)(1-\alpha)c_i\right)$ and assign $\boldsymbol{\alpha}_a=\boldsymbol{\alpha}_a^*$.
		\State Solve P3 with $\alpha_{b_1}=\ldots=\alpha_{b_N}=1$. Obtain transmit powers $E_{\{i,\alpha_{b, \{1\leq j\leq N\}}=1  \} }$.
		\For {$i: 1 \rightarrow N$ }
		\State If frame $i$ receives energy, then $\alpha_{b_i}^*=1$; $F \leftarrow i$.  
		\State Set $\rho_i=0$ and compute $\tilde{\alpha}_{b_i}$ that results in transmit energy equal to $E_{\{i,\alpha_{b, \{1\leq j\leq N\}}=1  \} }$.
		\State Obtain $\alpha_{b,\{i, \rho_i=0\}}=\max(\alpha_{c_i},\tilde{\alpha}_{b_i})$. 
		\State Compute the total loss  $L_{\{i, \alpha_{b,\{1\leq j\leq N\}}=1  \}}$ and $L_{i, \rho_i=0}$  if $ i\notin F$. 
		\If {$L_{i, \rho_i=0} \leq L_{\{i,\alpha_{b,\{1\leq j\leq N\}}=1  \}}$} \hspace{.2cm} $\rho_i^*=0$; $A \leftarrow i$
		\Else  \hspace{.2cm} $\alpha_{b_i}^*=1$; $B \leftarrow i$
		\EndIf
		\State Substitute $\alpha_{b_i}=1,\forall\; i\in B\cup F$ and  $\rho_i=0, \forall\; i\in A$.
		\EndFor
		\State Solve the resultant  convex optimization problem to obtain $\boldsymbol{\rho}^*$ and $\mathbf{e}_{b}^*$. 
		\State $d_{b_i}^*=\{\min(d_{b_i},D_p):d_{b_i}(1-\rho_i^*)\tau/\mathcal{N}_d(d_{b_i})=e_{b_i}^*\}$ for each $i\in\{1,\ldots,N\}$.  
		\State Return  $\boldsymbol{\rho}^*,\boldsymbol{\alpha}_a^*,\boldsymbol{\alpha}_b^*,\mathbf{d}_b^*$.
		\EndProcedure
	\end{algorithmic}
\end{algorithm}

We present the algorithm in Algorithm \ref{non-causal_gen}. The convergence of the algorithm is guaranteed as the average rate increases in each iteration. {The computational complexity analysis of Algorithm \ref{non-causal_gen} is given as follows. We first note that two separate convex optimization problems are solved in Step 3 and Step 14,  each with a worst-case polynomial complexity in $N$. The complexity of the remaining steps is linear in $N$. Hence, we conclude that the worst case complexity of Algorithm \ref{non-causal_gen} is polynomial in $N$. Specifically, the computational complexity of Algorithm \ref{non-causal_gen} is $\mathcal{O}(N^{3})$ when interior-point methods are used to solve the convex optimization problems \cite{complexity}.}

 {The rationale behind some important steps in the Algorithm \ref{non-causal_gen} are as follows. Based on Lemma \ref{lemma:optALPHA}, we note that Step 2 gives the optimal result. Based on Lemma \ref{lemma:rgtzero} and the preceding discussion (in points 1 to 3), we can see that the result obtained in Step 3 to Step 13 is close to the optimal result. Further, Step 14 gives the optimal result as the optimization problem being solved is convex.}
Table \ref{my-label} provides a summary of various optimization problems considered so far with some useful comments. 
\begin{table*}[t]
	\centering
	\begin{tabular}{|p{2cm}|p{6cm}|p{7cm}|}
		\hline
		\begin{center} Circuit Cost, $p$ \end{center} &\begin{center} Single Frame Case, P0 in \eqref{eq:singlerate}	\end{center} & \begin{center} Multiple Frame Case, P1 in \eqref{eq:average_rate1}	\end{center}  \\ \hline
		\begin{center}{$p=0$}  \end{center} &{$\gamma^*=\rho^*=d_b^*=0$, $\alpha_b^*=1$, $\alpha_a$ does not play any role. Battery is neither charged nor discharged.} & {P1 is reformulated as P2 in \eqref{eq:zp}. Further, $\gamma_i^*=\rho_i^*=0$, $i=1,\ldots,N$ and $\alpha_{a_i}$'s do not play any role; optimization variables: $\alpha_{b_i}$'s and $d_{b_i}$'s. See Theorem \ref{thm:zeropower}.} \\ \hline
		\begin{center} {$p>0$}  \end{center} &{$\gamma^*=0$, either $\rho^*> 0$, $\alpha_a^*<1$, $\alpha_b^*=1$, $d_b^*>0$ or $\rho^*=d_b^*=0$, $\alpha_b^*=1$ with arbitrary $\alpha_a$. 
			The harvested energy is stored at the optimal rate over $[0,\rho^*\tau)$. See Theorem  \ref{thm:opt_save_ratio_gen}. } & {Using an approximate discharge efficiency model, P1 is reformulated as P3 in \eqref{eq:approxA}. Based on Lemma \ref{lemma:optALPHA} and Lemma \ref{lemma:rgtzero}, P3 is solved iteratively to obtain an approximate solution to P1. See Algorithm \ref{non-causal_gen}.} \\ \hline
	\end{tabular}
	\caption{The optimization problems considered in the work  under various cases on the circuit cost, $p$ with some useful comments.}
	\label{my-label}
	\vspace*{-\baselineskip}
\end{table*}

\subsection{On-line Policies} \label{statistical}
In practice, it would be unrealistic to have the non-causal knowledge of the harvested power and the channel state information, but, it is likely that we have statistical information.

The optimization problem P1 in \eqref{eq:average_rate1} does not apply to systems with only stochastic knowledge of the energy arrival profile.  Here, we develop P1 in three directions below, leading to three suboptimal policies to select the decision variables  $\{\boldsymbol{\rho},\boldsymbol{\alpha}_a,\boldsymbol{\alpha}_b,\boldsymbol{\gamma}, \mathbf{d}_b\}$. The motivation behind these simplifications and sub-optimal policies is that they are practical and simple to implement. We also compare the proposed sub-optimal policies with the optimal off-line and on-line policies.
 
\subsubsection{{Optimal Online Policy}}
{To obtain the optimal power allocation when only the causal knowledge and the statistical information of the harvested powers and channel power gains are available, we employ the stochastic dynamic programming based approach \cite{dp}. 
Let $s_n=(C_n,H_n,B_{n-1})$ denote the state of the system in frame $n$, where $C_n$ is the harvested power, $H_n$ is the channel power gain and $B_{n-1}$ is the residual energy at the start of the frame $n$. We assume that the state information of any given frame is known at the start of the frame. Note that $s_1=(C_1,H_1,B_0)$ is the initial state of the system. 
Our goal is to maximize the average rate over a finite horizon of $N$ frames, by choosing a policy, $\pi=\{\rho_n(s_n),\alpha_{a_n}(s_n),\alpha_{b_n}(s_n),d_{b_n}(s_n), \forall s_n, n=1,\ldots,N \}$, that selects time and power splitting ratios and discharge powers for each of the frames. A policy is feasible if the energy causality constraints, battery capacity constraints, bandwidth constraints and  maximum charge and discharge rate constraints, specified in \eqref{eq:gc11} -- \eqref{eq:sc6}, are satisfied for possible states in all the frames. Let $\Pi$ denote the set of all feasible policies. Given the initial state $s_1$, the maximum average rate is given by,
\begin{align}
\mathcal{R}_{\mathrm{on}}^*=\max_{\pi\in\Pi}\mathcal{R}_{\mathrm{on}}(\pi)
\end{align}
where 
\begin{align}
\mathcal{R}_{\mathrm{on}}(\pi)=\frac{1}{N}\sum_{n=1}^{N}\mathbb{E}\left[R(C_n,H_n,B_{n-1},\rho_n,\alpha_{a_n},\alpha_{b_n},d_{b_n})|s_1,\pi\right]
\end{align}
where $R(.)$ is given by \eqref{eq:Rateind} and the expectation is with respect to the random harvested power and the channel power gain. 
The maximum average rate, $\mathcal{R}_{\mathrm{on}}^*$ of the system, given by the value function $J_1(s_1)$, can be computed recursively based on Bellman's equations, starting from $J_N(s_N), J_{N-1}(s_{N-1})$, and so on until $J_1(s_1)$ as follows:
\begin{subequations}\label{eq:dp}
\begin{alignat}{2}
&J_N(C_N,H_N,B_{N-1})=\max_{\{\rho_N,\alpha_{a_N},\alpha_{b_N},d_{b_N}\}} R(C_N,H_N,B_{N-1},\rho_N,\alpha_{a_N},\alpha_{b_N},d_{b_N}) \label{eq:dpN}\\
&J_n(C_n,H_n,B_{n-1})=\max_{\{\rho_n,\alpha_{a_n},\alpha_{b_n},d_{b_n}\}} R(C_n,H_n,B_{n-1},\rho_n,\alpha_{a_n},\alpha_{b_n},d_{b_n})+\bar{J}_{n+1}(C_{n+1},H_{n+1},B_n)\nonumber\\ 
&\qquad\qquad\qquad\qquad\qquad\qquad\qquad\qquad\text{for}\; n=1,\ldots,N-1  \label{eq:dpn}
\end{alignat}
\end{subequations}
where $\bar{J}_{n+1}(C_{n+1},H_{n+1},x)=\mathbb{E}_{C_{n+1},H_{n+1}}\left[{J}_{n+1}(C_{n+1},H_{n+1},x)\right]$ is the average throughput across frames $n+1$ to $N$ averaged over all the realizations of $C_{n+1}$ and $H_{n+1}$. Note that in \eqref{eq:dpn}, we account for the fact that $C_i$'s and $H_i$'s are independent.  Note that the residual energy $B_n$ in \eqref{eq:dpn} is a function of the decision variables $\rho_n,\alpha_{a_n}, \alpha_{b_n}$ and $d_{b_n}$. An optimal policy is denoted as $\pi^*=\{\rho_n^*(s_n),\alpha_{a_n}^*(s_n),\alpha_{b_n}^*(s_n),d_{b_n}^*(s_n), \forall s_n, n=1,\ldots,N \}$, where $\{\rho_n^*(s_n),\alpha_{a_n}^*(s_n),\alpha_{b_n}^*(s_n),d_{b_n}^*(s_n)\}$ is the optimal solution to \eqref{eq:dp} when the state of the system is $s_n$. 
}

\subsubsection{Greedy Algorithm}
When we only have the instantaneous knowledge of the harvested power but not the non-causal or statistical information on the power profile, the entire harvested energy in any frame is utilized in the same frame itself. In each of the frames, the corresponding single frame optimization problem is solved.   Based on Theorem \ref{thm:opt_save_ratio_gen}, the optimal solution can be easily found in each of the frames.   
The algorithm is simple to implement and achieves the optimal rate when all energy in the battery must be used up within each frame. But, the instantaneous optimality comes at the cost of increased circuit energy consumption due to longer duration of circuit operation. 
 \begin{algorithm}[t]
	\caption{ Statistical  Algorithm}
	\label{stat}
	\begin{algorithmic}[1]
		\Procedure{STATISTICAL}{$B_0$, \textbf{c, h,  N}}
		\State Compute $\alpha_{a_i}^*=\argmax_{\alpha_{c_i}\leq \alpha\leq 1} \left( \mathcal{N}_c((1-\alpha)c_i)(1-\alpha)c_i\right)$. $b\leftarrow B_0$
		\For {$i: 1 \rightarrow N$ }
		\State $[\boldsymbol{\rho}^t,\mathbf{d}^t_{b},\boldsymbol{\alpha}_{b}^t]$ =ENERGY-ALLOC($b,[c_i,\bar{C}],[h_i,\bar{H}]$, 2) 
		\State  ${\rho}_i \leftarrow \boldsymbol{\rho}^t(1)$, ${d}_{b_i} \leftarrow \mathbf{d}^t_{b}(1)$ and ${\alpha}_{b_i} \leftarrow \boldsymbol{\alpha}_{b}^t(1)$. 
		\State  $b \leftarrow $ energy remaining in the battery in frame $i$.  
		\EndFor
		\EndProcedure
	\end{algorithmic}
\end{algorithm}
\subsubsection{Statistical Algorithm (SA)}
In addition to the instantaneous knowledge, when we have the statistical information (such as the mean value) of harvested powers and channel gains across the frames, we propose an algorithm based on Algorithm \ref{non-causal_gen}. 
Let the expected values of the harvested power and channel gains be $\bar{C}$ and $\bar{H}$, respectively. Let the TSRs, PSRs and the discharge powers be represented by $(\rho_i,\alpha_{b_i},d_{b_i})$ in frames $i\in \{1,\ldots,N\}$. 

At the beginning of any frame $i$, we have the instantaneous knowledge of the harvested power and the channel gain, i.e., $(c_i,h_i)$, residual energy in the battery and $(\bar{C},\bar{H})$, but, we do not have any  information on $(c_{i+1},\ldots,c_N,h_{i+1},\ldots,h_N)$. To find $(\rho_i,\alpha_{b_i},d_{b_i})$, we consider a hypothetical two-frame optimization problem with the first frame being the frame $i$ and the second frame being a hypothetical frame with parameters $(\bar{C},\bar{H})$.  Then, at the beginning of frame $i$,  for $i=1,\ldots, N$, the transmitter solves the optimization problem  P1 in \eqref{eq:average_rate1} for the above two-frame hypothetical  problem. 
The statistical algorithm is presented in Algorithm \ref{stat}.

\subsubsection{Constant Time/Power Splitting Ratio (CTSR/CPSR) Policies } \label{sec:CSR}
Though the adaptive  policies  described above are simple and practical, simpler systems may not have the capability to measure the harvested energy and the channel states  instantaneously. In such systems, it is not feasible to compute the suitable TSRs and PSRs for each frame instantaneously,  at the frame beginning. It is more practical to use a single, pre-computed time/power splitting ratio across all the frames. A sensible choice of the TSRs and PSRs is the one that maximizes the average rate in \eqref{eq:average_rate1}  with an additional constraint that the TSRs and PSRs in each frame have to be equal i.e., $\rho=\rho_i, \alpha_a=\alpha_{a_i}, \alpha_b=\alpha_{b_i}$, for $i=1,\ldots,N$, with only the stochastic knowledge of energy arrival rates. 

Since, the constraint $(1-\alpha_{b_i})\rho_i=0,\; i=1,\ldots,N$, has to be satisfied, we fix either $\alpha_{b_1}=\ldots=\alpha_{b_N}=1$ and $\alpha_a=\ldots,\alpha_N=\alpha_a^{(c)}=\argmax_{\alpha_{\mathbb{E}(C)}\leq \alpha\leq 1}  \left(\mathcal{N}_c((1-\alpha)\mathbb{E}(C))(1-\alpha)\mathbb{E}(C)\right)$, where $\mathbb{E}(C)$ is the expected value of the harvested power, $\alpha_{\mathbb{E}(C)}=1-C_p/\mathbb{E}(C)$ and, find the optimal $\rho$, leading to \emph{constant time splitting ratio (CTSR) policy }, or fix $\rho_1=\ldots=\rho_N=0$ and find the optimal $\alpha_b$, leading to \emph{constant power splitting ratio (CPSR) policy }.

\section{Numerical Results} \label{numerical_results}
In our simulations, we assume that information rate $R=0.5\log(1+hP_t/(N_0W))$ bits per channel use, where $P_t$ is the average transmit power, $h$ is the channel power gain, $N_0=10^{-15}$ \si{\watt/\hertz} is the AWGN power spectral density and $W=\SI{1}{\mega \Hz}$ is the channel bandwidth. We assume that $N_s=10^6$ symbols are transmitted per frame of duration $T=\SI{1}{\second}$. 
We assume that the channel power gain is exponentially distributed with unit mean. 
Based on \cite{Krieger}, we assume that the charging efficiency,  $\mathcal{N}_c(c_p)=1.5-0.5\sqrt{1+4rc_p/V_B^2}$, the discharge efficiency, $\mathcal{N}_d(d_p)=0.5+0.5\sqrt{1-4rd_p/V_B^2}$, the maximum charge power, $C_p=2V_B^2/r$ and the maximum discharge power, $D_p=V_B^2/(4r)$, where 
$V_B$ is the nominal voltage of the battery. 
   \begin{figure*}[t]
	\centering
	\begin{subfigure}{0.48\textwidth}
		\includegraphics[width=1\textwidth]{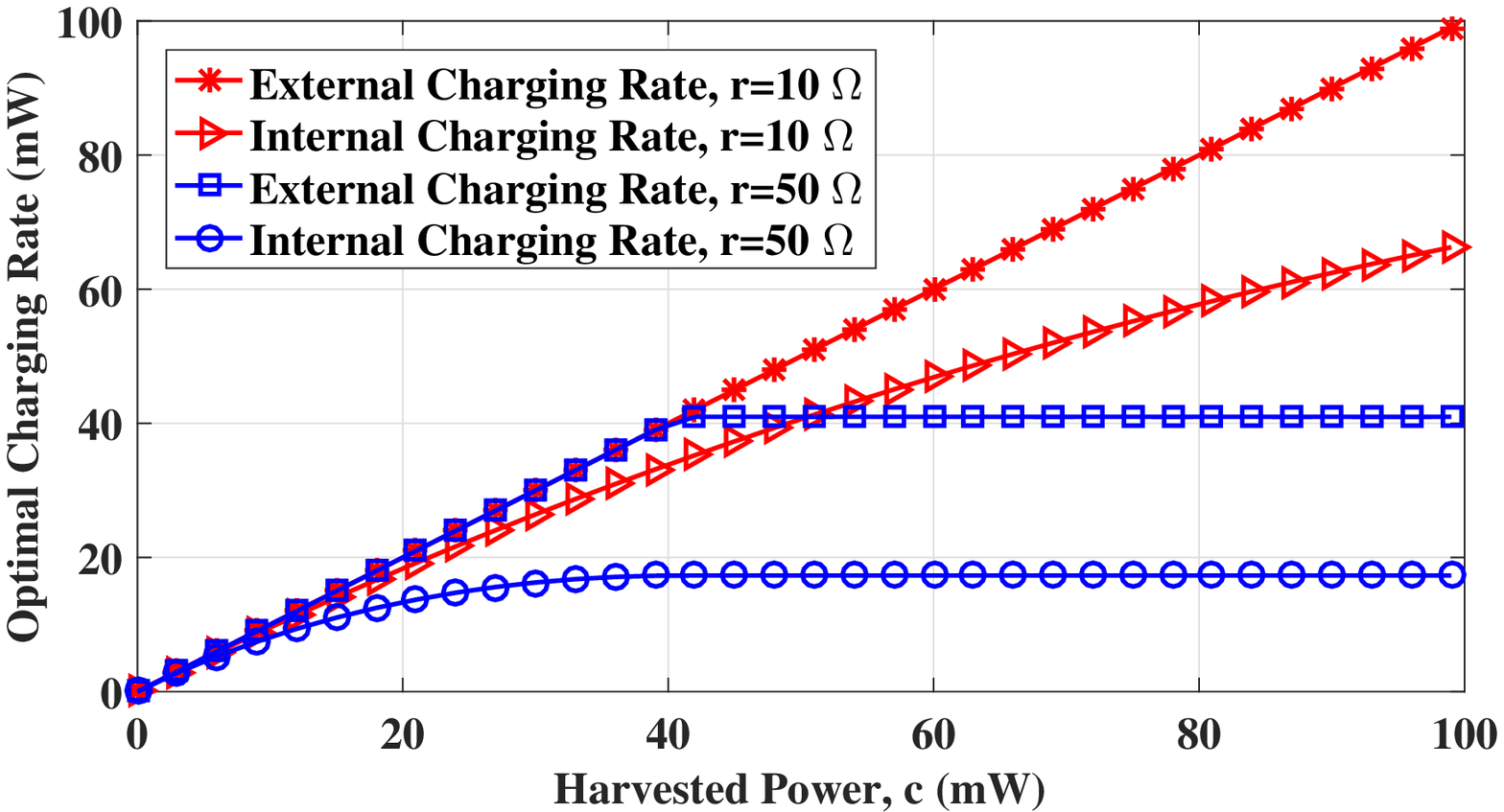}
		\caption{Variation with the harvested power for $h=1$, $V_B=\SI{1.5}{\volt}$.}
		\label{fig:charge_rateH}
	\end{subfigure}
	\begin{subfigure}{0.48\textwidth}
		\includegraphics[width=1\textwidth]{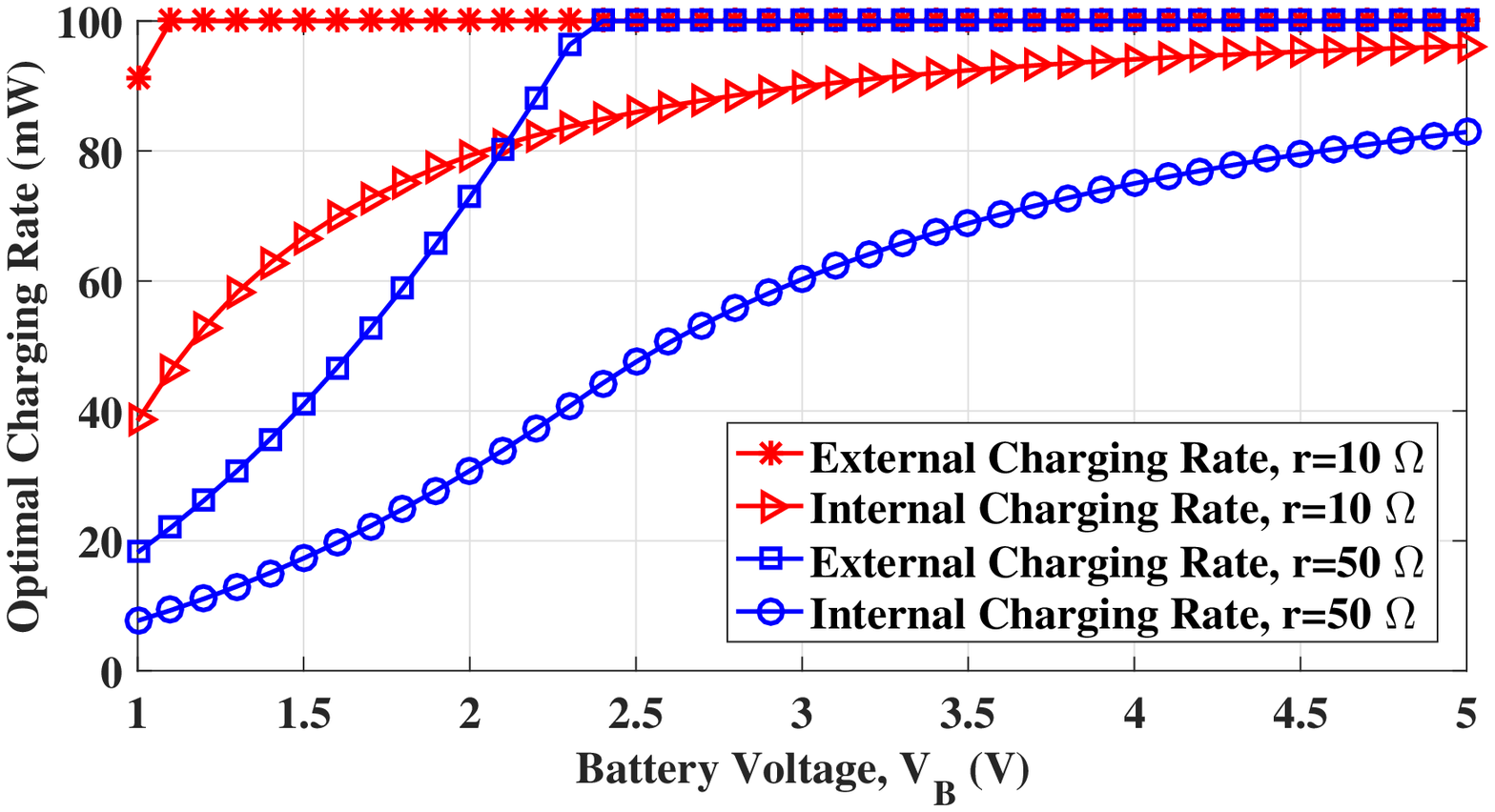}
		\caption{Variation with the battery voltage for $h=1$, $c=\SI{100}{\milli\watt}$.}
		\label{fig:charge_rateV}
	\end{subfigure}%
	\caption{Variation of optimal internal and external charging rates  with  (a) the deterministic harvested power and  (b) the battery voltage.}   
	\label{fig:charge_rate}
	\vspace{-.5cm}
\end{figure*}
\subsection{Variation of  Optimal Charging Rates with the Harvested Power and Nominal Battery Voltage}
In Fig. \ref{fig:charge_rate}, we present the variation of optimal internal and external charging rates with harvested power (in Fig. \ref{fig:charge_rateH}) and the nominal battery voltage (in Fig. \ref{fig:charge_rateV}). The external charging rate, given by $c_p^*=(1-\alpha_a^*)c$, indicates the power directed to the battery after the optimal power splitting, and the internal charging rate,  given by $c_p^*\mathcal{N}_c(c_p^*, r)$, indicates the the rate at which energy gets stored in the battery internally, after the losses in the internal resistance. 

We make two important observations from Fig. \ref{fig:charge_rateH}. First, when the internal resistance is \emph{low},  the external charging rate linearly increases with the harvested power (in this case, $\alpha_a^*=0$), but the internal charging rate increases at a slower rate with the harvested power due to the resistive losses.  Second, when the internal resistance is \emph{high},  then both the external and internal charging rates increase only up to a threshold, beyond which the battery is charged at the optimal charging rate, which is independent of the harvested power. Similarly, from Fig. \ref{fig:charge_rateV}, we note that the internal resistance significantly  impacts the external and internal  charging rates for a wide range of the nominal voltage of the battery.   
   \begin{figure*}[t]
	\centering
	\begin{subfigure}{0.48\textwidth}
		\includegraphics[width=\textwidth]{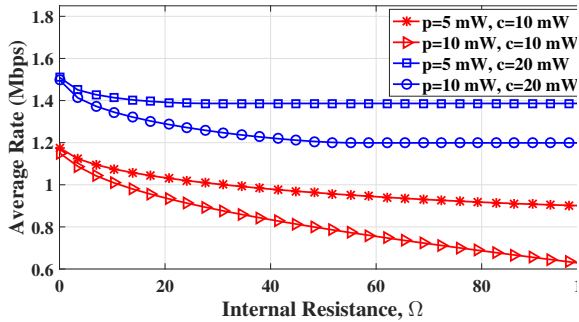}
		\caption{Optimal Rate for the Single Frame}
		\label{fig:inst_rate}
	\end{subfigure}
	\begin{subfigure}{0.48\textwidth}
		\includegraphics[width=\textwidth]{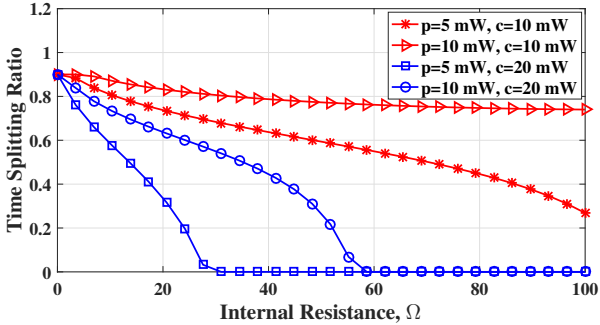}	
		\caption{Optimal Time Splitting Ratio for the Single Frame}
		\label{fig:inst_save_ratio}
	\end{subfigure}%
	\caption{Comparison of (a) optimal rates and the corresponding (b) optimal time splitting ratios and their variation   with the internal resistance for two values of  the circuit power, $p$ and non-random harvested powers, $c$ with  $T=\SI{1}{\second}$, $h=1$, $B=\SI{20}{\milli\joule}$,  $V_B=\SI{1.5}{\volt}$ and $\rho_{W}=0.9$.}   
	\label{fig:inst}
	\vspace{-.7cm}
\end{figure*}

\subsection{The Optimal Rate in the Single Frame Case }  
Fig. \ref{fig:inst_rate} shows the variation of the optimal rate with the battery internal resistance for two values of the circuit cost and harvested powers and Fig. \ref{fig:inst_save_ratio} shows the corresponding optimal TSRs.   
From Fig. \ref{fig:inst}, we make two important observations. 
First, when the circuit cost is in the order of the harvested power,  the optimal rate decreases with the increasing internal resistance and the optimal TSR is greater than zero. This is because when the circuit cost is in the order of the harvested power, one can save on the circuit losses by operating the circuit for smaller amount of time while the harvested energy is stored and drawn from the battery. 
Second, when the harvested power is few times more than the circuit cost, then the optimal rate decreases up to a certain point beyond which the rate is independent of the internal resistance.  The reason is that the battery charge and discharge losses increase as the internal resistance increases. But, the system continues the transactions (charging and discharging) with the  battery to reduce the circuit losses up to a certain point. This can be seen from Fig. \ref{fig:inst_save_ratio} where the TSR is greater than zero up to a certain value of the internal resistance.  As the internal resistance increases, the battery charging and discharging losses surpass the gain obtained by avoiding the circuit losses and, it turns out that avoiding any transactions with the battery is optimal.  Obviously, the optimal rate after the \emph{cut-off} point is independent of battery parameters.   

\subsection{Optimal Transmit Power Levels Under Two Different Models for Battery Losses }

   \begin{figure}[t]
	\centering
	\includegraphics [width=.75\textwidth] {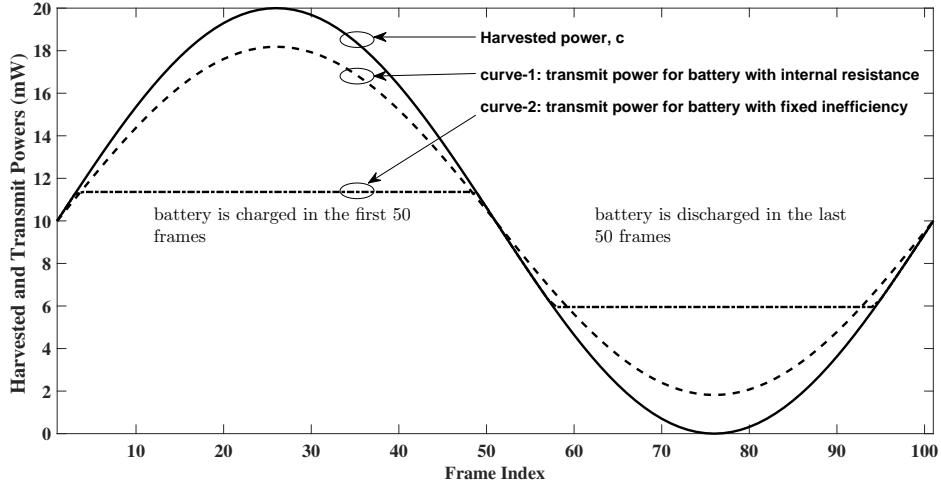}	
	\caption{A comparison of the optimal transmit power levels in two different battery loss models.   For curve-1, we assume that the battery has a non-zero internal resistance (\SI{5}{\ohm}) which results in battery charge/discharge inefficiencies that are functions of charge/discharge rates. The curve-2 is plotted assuming that the battery  has a fixed round-trip efficiency ($\mathcal{N}=\mathcal{N}_c\mathcal{N}_d=0.75$) as in  \cite{inefficiency}.  We assume that $B=\infty$ and  $V_B=\SI{1.5}{\volt}$.  }   
	\label{fig:ineff}
	\vspace{-.7cm}
\end{figure}
Due to the non-zero internal resistance, charge/discharge efficiencies vary with charge/discharge rates. 
Hence, it is insightful to compare optimal power allocation in this case with that when the  battery efficiency is a constant as in \cite{inefficiency}.  We show the optimal transmit powers in these two cases with the harvested power in Fig. \ref{fig:ineff}.  
It has been shown in \cite{inefficiency} that the optimal power allocation has a double threshold structure as shown in curve-1 of  Fig. \ref{fig:ineff}.  Unlike in curve-1, it is interesting to note that if the battery has a non-zero internal resistance,  the optimal transmit power strictly monotonically increases with the harvested power as shown in curve-2 and proved in Theorem  \ref{thm:zeropower}.

\subsection{Variation of the Average Rate in the Off-Line Policy with the Average Harvested Power}\label{PsvsTs}

\begin{figure}[t]
	\centering
	\includegraphics [width=.7\textwidth] {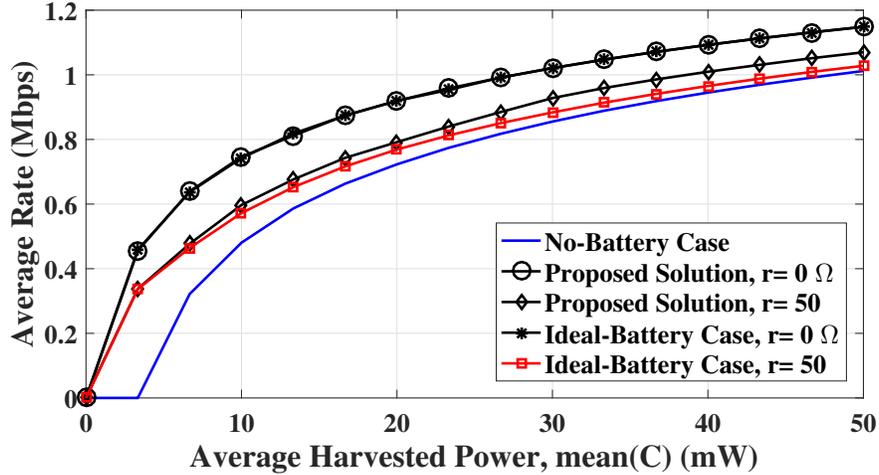}	
	\caption{Variation of the average rate with the average harvested power for various cases in the off-line policy with $T=\SI{1}{\second}$, $B=\SI{100}{\milli\joule}$,  $V_B=\SI{1.5}{\volt}$, $p=\SI{10}{\milli\watt}$, $N=100$ and $\rho_{W}=0.9$.}   
	\label{fig:offline}
		\vspace{-.7cm}
\end{figure}

In Fig. \ref{fig:offline}, we present the variation of the average rate in the off-line policy with the average harvested power, obtained by averaging the numerical results from 1000 independent runs of Monte Carlo simulations. The No-Battery Case curve assumes that the system is not equipped with any battery and the Ideal-Battery Case curve is obtained by adopting the optimal policy in an ideal battery (with zero internal resistance) to the non-ideal battery case. 
Note that the rate of increase in the average rate with the average harvested power is considerably affected by the  internal resistance.  This is because as the average harvested power increases, the charging/discharging rates increase resulting in the increased charging/discharging losses. It is interesting to note that as the average harvested power increases,  the average rate in the Ideal-Battery Case approaches the average rate in the No-Battery Case implying that the optimal policies designed for an ideal battery may be strictly suboptimal when the internal resistance is non-zero. 
\begin{figure}[b]
	\centering
	\includegraphics [width=.7\textwidth] {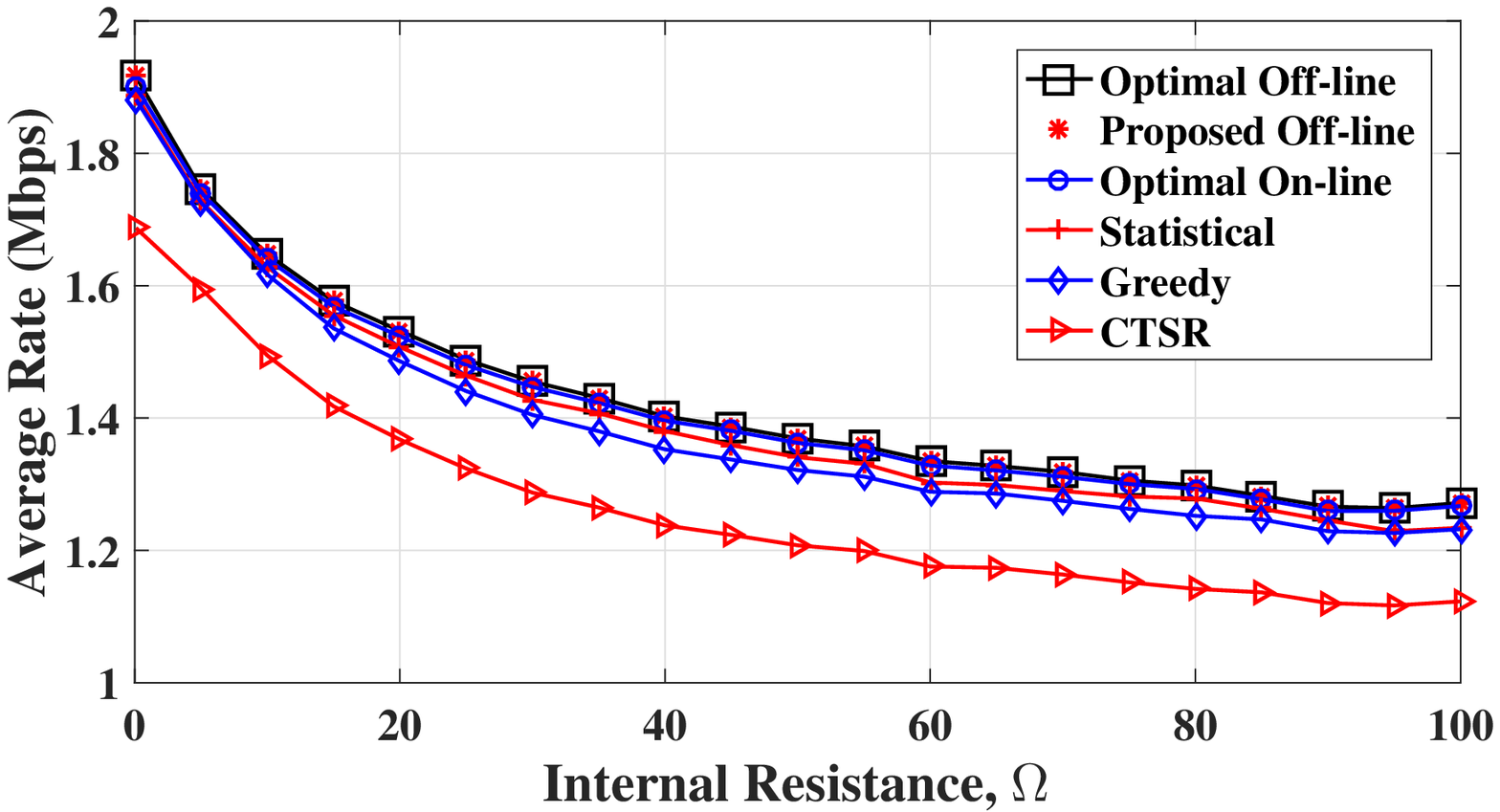}	
	\caption{Variation of the average rates with the internal resistance in the various algorithms with $p=\SI{50}{\milli\watt}$,  $T=\SI{1}{\second}$, $B=\SI{100}{\milli\joule}$,  $V_B=\SI{1.5}{\volt}$, $N=5$ and $\rho_{W}=0.9$.}   
	\label{fig:online}
	\vspace{-.7cm}
\end{figure}
\subsection{Comparison of the Performances of On-line and Off-line Policies}\label{OnvsOff}
{In Fig. \ref{fig:online}, we plot the average rates in the  off-line and on-line policies against the internal resistance values. We assume that the harvested power is uniformly distributed in $\{\SI{50}{\milli\watt}, \SI{100}{\milli\watt}\}$. In the dynamic programming based  Optimal On-line policy, $B_n$'s are discretized in step sizes of $0.0005$. We obtain the average rates by averaging the numerical results from $10^4$ independent runs of Monte Carlo simulations. To reduce the computational complexity, we use the step discharge model for all the algorithms. We fix the number of frames, $N$ to $5$ as the computational complexity becomes prohibitive for a larger N. }

{There are several interesting points to note in  Fig. \ref{fig:online}. 
First, average rates in all the policies decrease with the increasing internal resistance thereby indicating that the internal resistance is an important battery parameter that affects the performance of the EH-based communication systems significantly. Further, we note that the performance of the Proposed Off-line policy and the original Off-line policy in P1 are almost the same. As expected, the performance of the Optimal On-line policy is slightly worse than that of the off-line policies. Further, we note that the performance of the Statistical algorithm which is based on the Proposed Off-line algorithm is close to that of the Optimal On-line and  off-line policies. The CTSR algorithm performs worse than the Greedy algorithm as the CTSR algorithm does not adapt its decision variables to the varying harvested power. The average rate in CPSR algorithm is equal to $1.02$ Mbps and it is independent of the internal resistance, i.e., $\alpha_{b}^*=1$ which implies that the harvested energy is not stored in the battery.   }
{\begin{table}[t]
		\centering
		\begin{tabular}{|c|c|c|c|c|}
			\hline	
			Algorithms	      & $N=25$                & $N=50$              & $N=75$                & $N=100$                 \\ \hline
			Proposed Offline Algorithm    & \SI{0.3}{\second}     & \SI{1.27}{\second}  &  \SI{2.85}{\second}   & \SI{7.79}{\second}      \\ \hline
			Statistical Algorithm & \SI{0.26}{\second}    & \SI{.14}{\second}   & \SI{0.1}{\second}     & \SI{0.08}{\second}      \\ \hline
			Greedy Algorithm      & \SI{10}{\milli\second}& \SI{9}{\milli\second}     & \SI{5}{\milli\second} & \SI{4}{\milli\second}   \\ \hline
			CTSR Algorithm        & \SI{13}{\milli\second}& \SI{10}{\milli\second}    & \SI{10}{\milli\second}& \SI{10}{\milli\second}   \\ \hline
			CPSR Algorithm	      & \SI{10}{\milli\second}& \SI{10}{\milli\second}    & \SI{15}{\milli\second}& \SI{15}{\milli\second}   \\ \hline
		\end{tabular}
		\caption{Normalized runtime of the algorithms with Intel i7-5600U processor running at \SI{2.60}{\giga\hertz} using \textsc{Matlab} 2016a software package.  }
		\label{tab:running time}
		\vspace*{-\baselineskip}
\end{table}}
\subsection{Variation of the Normalized Runtime of the Algorithms with the Number of Frames}\label{runtime}
{
	In Table \ref{tab:running time}, we present the normalized runtime, the ratio of the total time taken to run the algorithm to the number of frames, for various algorithms when the algorithms are executed in Intel i7-5600U processor running at \SI{2.60}{\giga\hertz} using \textsc{Matlab} 2016a software package.  For the Proposed Off-line algorithm, as expected from the  complexity analysis, the runtime scales with the number of frames as $N^3$, approximately. Further, as expected, in all other algorithms, the normalized runtime does not change significantly with $N$. In the Statistical algorithm, one may note that the normalized runtime slowly decreases as $N$ increases. This is because the overhead of the algorithm dominates the runtime when the number of frames is small.}

\section{Conclusions}\label{conclusion}
In this paper,  we argue that the battery internal resistance fundamentally changes the way we design energy management techniques in energy harvesting communication systems. Our study shows that the internal resistance considerably inhibits the energy redistribution across frames.  This causes a significant reduction in the optimal average communication rate compared to that obtained using an ideal battery (i.e., zero internal resistance).  Furthermore, the optimal policy designed for an ideal battery performs poorly when the internal resistance is not negligible. 

In our work, the charging/discharging efficiencies are modeled as functions of the internal resistance and charge/discharge powers.  We assume a finite capacity battery,  non-zero circuit power and take into account limitations on bandwidth. In this context, we derive compact expressions for optimal time and power splitting ratios in the single frame case.
We then propose an iterative off-line algorithm to approximately solve the non-convex optimization problem which assumes \emph{a priori} knowledge of the harvested powers and channel gains in the multiple frame case. {We also solve for the optimal on-line policy by using stochastic dynamic programming assuming statistical knowledge and causal information of the harvested power and channel power gain variations.} We then propose three heuristic on-line algorithms and show that an algorithm that is inspired by the off-line policy performs significantly better than the other two heuristic algorithms. {Advanced analysis of the proposed algorithms is considered as a future work. }

\section*{Appendix}
\subsection{Proof of Lemma \ref{lemma:transmitpower} }
When $\alpha_a c \leq p$, we cannot operate the circuit during $[0,\rho\tau)$ for any $\rho$, hence, $\max(\alpha_ac-p,0)=0$ and  $R_a=0$ for any $\rho$, including $\rho=\rho^*$ and the transmission occurs only over $[\rho\tau,\tau]$ with constant power $\tau(\alpha_b c-p+d_b)(1-\rho)/N_s$.  
When $\alpha_a c >p$ we have,
\begin{align}
R(\rho,\alpha_a,\alpha_b,\gamma,d_b)&=\gamma R_a +(1-\gamma) R_b 
=\gamma \log(1+hP_a)+(1-\gamma)\log(1+hP_b)\\
&\stackrel{\text{a}}{\leq} \log(1+h\tau/N_s\left(\rho(\alpha_a c-p) +(1-\rho)(\alpha_b c-p+d_b)\right)) \\&\stackrel{\text{b}}{\leq}  \log(1+h\tau/N_s\left(c-p+ \hat{d}_b\right))
= \left.R_b\right\vert_{\rho=0} 
\end{align}
where $\hat{d}_b=\{d:d\tau/\mathcal{N}_d(d)=B_0\}$,  (a) follows from Jensen's inequality, (b) holds because of the following. When $d_b=0$, the term, $\rho(\alpha_a c-p) +(1-\rho)(\alpha_b c-p)\leq c(\rho\alpha_a+(1-\rho)\alpha_b)-p\leq c-p $ as $\rho\alpha_a+(1-\rho)\alpha_b\leq 1$; for any $d_b>0$, we have $\alpha_b=1$ and, from \eqref{eq:sc1}, $d_b(1-\rho)\tau =\mathcal{N}_d(d_b)(B_{\rho\tau}+B_0)=\mathcal{N}_d(d_b)\mathcal{N}_c((1-\alpha_a)c)(1-\alpha_a)c\rho\tau+\mathcal{N}_d(d_b)B_0 \leq(1-\alpha_a)c\rho\tau+\mathcal{N}_d(\hat{d}_b)B_0$  which implies that $\rho(\alpha_a c-p) +(1-\rho)(\alpha_b c-p+d_b)\leq c-p+\mathcal{N}_d(\hat{d}_b)B_0/\tau=c-p+\hat{d}_b$, where the upper bound is attained when $\rho=0$. 

Consolidating the results, for any given $\rho$ and a policy that has $R_a>0$, we can always find another policy with a higher average rate such that  $R_a=0$,  while satisfying all the constraints. Hence, $\gamma^*=0$ and $R_{[0,\rho^*\tau)}=0$ and all $N_s$ symbols are transmitted only during $[\rho^*\tau,\tau]$ with the transmit power $\tau(\alpha_b^* c-p+d_b^*)(1-\rho^*)/N_s$.      Lemma \ref{lemma:transmitpower} is thus proved.
\vspace{-.25cm}
\subsection{Proof of Theorem \ref{thm:opt_save_ratio_gen} }
Based on Lemma \ref{lemma:transmitpower}, we have $R=\log(1+h\left(\alpha_b c+d_b-p\right)(1-\rho)\tau/N_s)$. Since, $R$ is a monotonically increasing function of the transmit power, in order to maximize $R$, we can simply maximize the transmit power, $P=\left(\alpha_b c+d_b-p\right)(1-\rho)\tau/N_s$.  Since, $\tau$ and $N_s$ are constants, we instead maximize $E(\alpha_a, \alpha_b, d_b,\rho)=(\alpha_b c+d_b-p)(1-\rho)$. 
We first solve the problem by relaxing the  battery capacity constraint in \eqref{eq:sc2}. 
Since, draining the battery completely is optimal and, noting that $(1-\alpha_b)d_b=0$, in the optimal policy we must have, $(d_b/ \mathcal{N}_{d}(d_b))=(\mathcal{N}_c((1-\alpha_a )c)(1-\alpha_a )c\rho\tau+B_0)/((1-\rho)\tau)$ and $d_b\leq D_p$, for any feasible $\alpha_a$ and $\rho$, from \eqref{eq:sc1} and \eqref{eq:sc4}, respectively. 
Define $\hat{d}_b({\alpha_a,\rho})=\{\min(d_b,D_p):d_b/\mathcal{N}_{d}(d_b)= (\mathcal{N}_c((1-\alpha_a )c)(1-\alpha_a )c\rho\tau+B_0)/((1-\rho)\tau)\}$. For a concave decreasing $\mathcal{N}_d(d_b)$, it can be shown that $d_b/\mathcal{N}_d(d_b)$ is a convex increasing function of $d_b$. Hence, if $\alpha_a$ and $\rho$ are given, we can uniquely determine $\hat{d}_b(\alpha_a,\rho)$ always.  Hence, the optimal discharge power $d_b^*=\hat{d}_b({\alpha_a^*,\rho^*})$. Now, $E(.)$ can be treated as a function of only $\alpha_a, \alpha_b$ and $\rho$.  
Hence, $E(\alpha_a, \alpha_b,\rho)=(\alpha_b c-p+\hat{d}_b({\alpha_a,\rho}))(1-\rho)$.  
For any $\rho$ and $\alpha_a$, the quantity  $\alpha_bc(1-\rho)$ achieves its maximum at $\alpha_b=1$, hence, $\alpha_b^*=1$. Further, $\hat{d}_b({\alpha_a,\rho})$ is a monotonic increasing function of  $\mathcal{N}_c((1-\alpha_a )c)(1-\alpha_a )c$ and hence, it attains the maximum at $\alpha_a =\alpha_a^*=\argmax_{\alpha_c\leq \alpha\leq 1} (\mathcal{N}_c((1-\alpha )c)(1-\alpha )c)$ for any $\rho$. 
Hence, $E(\alpha_a^*, \alpha_b^*,\rho)=(\alpha_b c-p+\hat{d}_b(\alpha_a^*,\rho))(1-\rho)$. To obtain the maximum rate, we simply need to maximize $E(\alpha_a^*, \alpha_b^*,\rho)$ over $\rho$. 
Now, we note that the battery capacity constraint simply puts an upper bound on $\rho$.  From \eqref{eq:sc2}, we have, $\mathcal{N}_c(c_p^*)c_p^*\rho\tau+B_0 \leq B$ which implies $ \rho \leq (B-B_0)/(\mathcal{N}_c(c_p^*)c_p^*\tau) $. 
Hence the proof.   

\vspace{-.25cm}
\subsection{Proof of Lemma \ref{lemma:gamma}}
Recall that in Section \ref{single frame} we had defined $\gamma_i=0$ if $(\alpha_{a_i}c_i-p)\rho_i\tau<0$. 
Hence, to prove $\gamma_i^*=0$ for any $i$ we need to prove  $(\alpha_{a_i}^*c_i-p)^+\rho_i^*\tau=0$ for the $i$-th frame, where $(x)^+=\max(x,0)$. 
 If $(\alpha_{a_i}^*c_i-p)\leq 0$, then, we always have $(\alpha_{a_i}^*c_i-p)^+\rho_i^*\tau=0$.  But, whenever $(\alpha_{a_i}^*c_i-p)>0$,  we need to prove that $\rho_i^*=0$. To accomplish this, we note that the decision variables are coupled across the various frames as energy may get transferred from one frame to another in the optimal policy. This energy transfer can be accounted for by considering the residual energy available in the battery at the start of each of the frames. Let $B_{i-1}$ be the stored energy in the battery at the start of any frame $i$. Then, the energy consumed by frame $i$ from the battery is $B_{i-1}-B_{i}$ (a negative value indicates that energy is stored in the battery) in any frame  $i$.  By some means, if we know the value of  $B_{i-1}$'s, then, we can optimize each frame independent of the other frames. Assume that for any  frame $i$, $\alpha_{a_i}^*$, $\rho_i^*>0$, $B_{i-1}^*$ and $B_{i}^*$ are the optimal values.  As in the proof of Lemma \ref{lemma:transmitpower}, we can show that for any  frame $i$ with $\alpha_{a_i}^*c>p$ and $\rho_i^*>0$,  for any $B_{i-1}^*$ and $B_{i}^*$ values,  we can achieve a higher rate in frame $i$, than the rate when $\rho_i^*>0$, by selecting $\rho_i'^*=0$ and choosing an arbitrary $\alpha_{a_i}'^*$.  Hence, by contradiction, we must have $\rho^*_i=0$ in the optimal policy.  This proves that $\gamma_i^*=0$ for any $i$ in the optimal policy.

\subsection{Proof of Theorem \ref{thm:zeropower} }
We first note that whenever the battery capacity is infinite, \eqref{eq:gc12} is inactive and P2 is convex. Hence,  Karush-Kuhn-Tucker (KKT) conditions are necessary and sufficient for optimality.  
The Lagrangian of P2 is given by
\begin{align}
&L2=-\frac{1}{N}\sum_{i=1}^{N}\log\left(1+h_i(\alpha_{b_i}c_i+d_{b_i})\tau/N_s\right)+\sum_{i=1}^{N}\lambda_i\left(\sum_{k=1}^{i}\left(\tilde{d}_{b_k}-\tilde{c}_{b_k}\right)\tau-B_0\right) \nonumber\\
& -\sum_{i=1}^{N}\omega_id_{b_i}+\sum_{i=1}^{N}\delta_i(d_{b_i}-D_p)-\sum_{i=1}^{N}\mu_i(\alpha_{b_i}-\alpha_{c_i})+\sum_{i=1}^{N}\nu_i(\alpha_{b_i}-1)
\end{align}
where $\lambda_i, \omega_i, \delta_i, \mu_i$ and $\nu_i$ are non-negative Lagrange multipliers corresponding to inequalities \eqref{eq:gc11}, $d_{b_i}\leq 0$, $d_{b_i}\leq D_p$, $\alpha_{c_i}-\alpha_{b_i}\leq 0 $  and $\alpha_{b_i}-1\leq 0$, respectively. 
We first consider the case when the battery is being charged. 
The stationary conditions imply that
\begin{align}\label{eq:txpower}
P_i=\frac{\left(\alpha_{b_i}c_i+d_{b_i}\right)\tau}{N_s}&=\frac{c_i\tau/\ln(2)}{-\tilde{c}_2'(\alpha_{b_i})\tau NN_s\sum_{j=i}^{N}\left(\lambda_j\right)-\mu_i+\nu_i}-\frac{1}{h_i}
\end{align}
Now, consider any two frames $j$ and $k(>j)$ such that the battery has a non-zero amount of residual energy less than its capacity in all the frames between them. 

For any frame $i$, since the battery is charged at $(1-\alpha_{b_i})c_i$ W,  we must have $\alpha_{c_i}\leq \alpha_{b_i}<1$.  The transmit power $P_i=\alpha_{b_i}c_i\tau/N_s$ as $d_{b_i}=0$. Whenever $P_i>0$ and when the battery is charged at the rate strictly less than $C_p$, we must have $\alpha_{b_i}>\alpha_{c_i}$.  Hence, from complementary slackness conditions, we have $\mu_i=\nu_i=0$. From \eqref{eq:txpower}, after rearranging the terms, we have,
\begin{align}\label{eq:ga}
g_{\alpha}(\alpha_{b_i},c_i)=-\tilde{c}_2'(\alpha_{b_i})\alpha_{b_i}\tau N+\frac{-\tilde{c}_2'(\alpha_{b_i})NN_s}{h_ic_i}=\frac{1}{\ln(2)\sum_{j=i}^{N}\left(\lambda_j\right)}
\end{align} 
Note that in the frame $i$ between any two frames in which the battery is fully drained, $\lambda_i=0$. Hence, the right hand side in \eqref{eq:ga} and consequently, $g_{\alpha}(\alpha_{b_i},c_i)$ remain constant. 
Recall that $\tilde{c}_2(\alpha_{b_i})=(1-\alpha_{b_i})c_i\mathcal{N}_c((1-\alpha_{b_i})c_i)$ which implies $-\tilde{c}_2'(\alpha_{b_i})=\mathcal{N}_c((1-\alpha_{b_i})c_i)c_i-(1-\alpha_{b_i})c_i\mathcal{N}_c'((1-\alpha_{b_i})c_i)$.  Hence, for any $c_k>c_j$, from \eqref{eq:ga}, we have, 
\begin{align}
&(\mathcal{N}_c((1-\alpha_{b_j})c_j)-(1-\alpha_{b_j})\mathcal{N}_c'((1-\alpha_{b_j})c_j))\left(\alpha_{b_j}c_j\tau N+ A\right)=\nonumber\\
&(\mathcal{N}_c((1-\alpha_{b_k})c_k)-(1-\alpha_{b_k})\mathcal{N}_c'((1-\alpha_{b_k})c_k))\left(\alpha_{b_k}c_k\tau N+ A \right)\label{eq:eq}\\
&(\mathcal{N}_c((1-\alpha_{b_j})c_j)-(1-\alpha_{b_j})\mathcal{N}_c'((1-\alpha_{b_j})c_j))\left(\alpha_{b_j}\tau N+A \right)>\nonumber\\
&(\mathcal{N}_c((1-\alpha_{b_k})c_k)-(1-\alpha_{b_k})\mathcal{N}_c'((1-\alpha_{b_k})c_k))\left(\alpha_{b_k}\tau N+A\right) \label{eq:in}
\end{align} 
where $A={NN_s}/{h}$.
Now, by contradiction, we can prove that $\mathcal{N}_c((1-\alpha_{b_j})c_j) >  \mathcal{N}_c((1-\alpha_{b_k})c_k)$ (if  $\mathcal{N}_c((1-\alpha_{b_j})c_j) \leq \mathcal{N}_c((1-\alpha_{b_k})c_k)$, it contradicts \eqref{eq:in}). Substituting this result in \eqref{eq:txpower}, it can be shown that that $P_k>P_j$ for any $c_k>c_j$. Further, when the battery is charged at its maximum charge rate of $C_p$, the result follows straightforward as the excess power is directly used for the transmission from the direct path. 

Using the similar technique, we can derive the result when the battery is discharged at a rate below the maximum discharge rate $D_p$ in the optimal case.  If the battery discharge rate is fixed at $D_p$ in the optimal case, the result is straightforward as the harvested power is directly used for the transmission from the direct path. Hence, the proof.   
\vspace{-.25cm}
\subsection{Proof of Lemma \ref{lemma:optALPHA}}\label{app:lemma:optALPHA}
For a given frame $i$ with a given $\rho_i$, from Theorem  \ref{thm:opt_save_ratio_gen}, it follows that  $R_i(\alpha_{a_i})\leq R_i(\alpha_{a_i}^*)$ if $\rho_i>0$ without impacting the rates in the other frames.  If $\rho_i=0$,  we note that the value of $\alpha_{a_i}$ does not play any role in the optimization problem.  
The above two statements hold true irrespective of the optimal values in the other frames. Hence, the result follows. 

\subsection{Proof of Lemma \ref{lemma:rgtzero} }\label{app:opt_save_ratio_gen}
For simplicity, we assume that  the battery capacity constraint in \eqref{eq:gc12} is inactive. 
Let $\mathcal{N}_c^{(y)}=(1-y)\mathcal{N}_c((1-y)c)$. 
From Lemma \ref{lemma:optALPHA}  we have $\alpha_{a_i}^*=\argmax_{\alpha_{c_i}\leq \alpha\leq 1} (\mathcal{N}_c((1-\alpha)c)(1-\alpha)c)$.
Let $\rho_i>0$ and $\alpha_{b_i}<1$ be the optimal solution for any frame $i$. 
When $d_b>0$, based on our remarks in the system model, we must have $\alpha_{b_i}=1$.   Hence, we cannot have $\alpha_{b_i}<1$ in the optimal solution. 
When $d_b=0$,  let $B_{i-1}$ and $B_{i}$ be the residual energy at the start of frame $i$ and $i+1$, respectively. Hence,
\begin{align}\label{eq:chargeab}
\mathcal{N}_c((1-\alpha_{a_i})c)(1-\alpha_{a_i})c\rho_i\tau+\mathcal{N}_c((1-\alpha_{b_i})c)(1-\alpha_{b_i})c(1-\rho_i)\tau=B_{i}-B_{i-1}
\end{align} 
Now, let us consider $\alpha_{b_i}'=1$ with the corresponding  $\rho_i'>\rho_i$ such that
\begin{align}\label{eq:chargea}
\mathcal{N}_c((1-\alpha_{a_i})c)(1-\alpha_{a_i})c\rho'\tau=B_{i}-B_{i-1}
\end{align}

From \eqref{eq:chargeab} and \eqref{eq:chargea}, we have $\rho_i'=\rho_i+(1-\rho_i)\frac{\mathcal{N}_c^{(\alpha_{b_i})}}{\mathcal{N}_c^{(\alpha_{a_i})}}$.
  Let $E_{\alpha_{b_i}'=1,\rho_i'}$ and $E_{\alpha_{b_i},\rho_i}$ be the transmit energy  $\alpha_{b_i}'$ and $\alpha_{b_i}$, respectively.  Now, consider the difference of transmit energy in the two cases, i.e.,  
\begin{align}
&\frac{N_s}{\tau}(E_{\alpha_{b_i}'=1,\rho_i'}-E_{\alpha_{b_i},\rho_i})=(c_i-p)(1-\rho_i')-(\alpha_{b_i} c_i-p)(1-\rho_i)\\
&\stackrel{\text{a}}{=} (c_i-p)(1-\rho_i-(1-\rho_i)\frac{\mathcal{N}_c^{(\alpha_{b_i})}}{\mathcal{N}_c^{(\alpha_{a_i})}})-(\alpha_{b_i} c_i-p)(1-\rho_i)\\
&=(1-\rho_i)\left(   c_i\left(1-\alpha_{b_i}-\frac{\mathcal{N}_c^{(\alpha_{b_i})}}{\mathcal{N}_c^{(\alpha_{a_i})}}\right)+ p\left(\frac{\mathcal{N}_c^{(\alpha_{b_i})}}{\mathcal{N}_c^{(\alpha_{a_i})}}\right)\right)
\end{align}
\begin{align}
&=(1-\rho_i)\frac{c_i\mathcal{N}_c^{(\alpha_{b_i})}}{\mathcal{N}_c^{(\alpha_{a_i})}}  \left(\frac{\mathcal{N}_c((1-\alpha_{a_i})c)(1-\alpha_{a_i})}{\mathcal{N}_c((1-\alpha_{b_i})c_i)}-1+\frac{p}{c_i} \right)\\
&\stackrel{\text{b}}{\geq}(1-\rho_i)\frac{c_i\mathcal{N}_c^{(\alpha_{b_i})}}{\mathcal{N}_c^{(\alpha_{a_i})}} \left(1-\frac{p}{c_i}\right)\left(\frac{1-\mathcal{N}_c((1-\alpha_{b_i})c_i)}{\mathcal{N}_c((1-\alpha_{b_i})c_i)}\right) \stackrel{\text{c}}{\geq} 0
\end{align}
where (a) is obtained by substitution of $\rho_i'$, (b) follows from Theorem  \ref{thm:opt_save_ratio_gen}, and (c) follows because   $1-\mathcal{N}_c((1-\alpha_{b_i})c_i) \geq 0$ and because of the fact that we cannot run the circuitry if $c+d_{b_i}\leq p$. Hence, the transmit energy when $\alpha_{b_i}=1$ is higher than that when $\alpha_{b_i}<1$ given all other parameters remain constant.  We have thus shown that $\alpha_{b_i}^*=1$ whenever $\rho_i>0$.  
\vspace{-.25cm}
\bibliographystyle{ieeetran}
\bibliography{IEEEabrv,twireless}
\end{document}